\newtheorem{theorem}{Theorem}[section]
\newtheorem{proposition}[theorem]{Proposition}
\newtheorem{lemma}[theorem]{Lemma}
\newtheorem{corollary}[theorem]{Corollary}
\theoremstyle{definition}
\newtheorem{example}[theorem]{Example}
\numberwithin{equation}{section}
\numberwithin{equation}{section}
\def\EE{{\mathbb{E}}}
\def\0{{\bar{0}}}
\def\RR{{\mathbb{R}}}
\def\R{{\mathbb{R}}}
\def\RR{{\mathbb{R}}}
\begin{document}
\title[Error bounds for consistent reconstruction]{Error bounds for consistent reconstruction: \\ random polytopes and coverage processes
}

\author{Alexander M. Powell}
\address{Vanderbilt University, Department of Mathematics,
Nashville, TN 37240, USA}
\email{alexander.m.powell@vanderbilt.edu}

\author{J. Tyler Whitehouse}
\email{tyler.whitehouse@gmail.com}


\date{April 9, 2013}

\keywords{Consistent reconstruction, estimation with uniform noise}

\begin{abstract}
Consistent reconstruction is a method for producing an estimate $\widetilde{x} \in \R^d$ of a signal $x\in \R^d$ if one is given a
collection of $N$ noisy linear measurements $q_n = \langle x, \varphi_n \rangle + \epsilon_n$, $1 \leq n \leq N$, that have been corrupted by i.i.d.
uniform noise $\{\epsilon_n\}_{n=1}^N$.  
We prove mean squared error bounds for consistent reconstruction when the measurement vectors $\{\varphi_n\}_{n=1}^N\subset \R^d$ are drawn
independently at random from a suitable distribution on the unit-sphere $\mathbb{S}^{d-1}$.
Our main results prove that the mean squared error (MSE) for consistent reconstruction is of the optimal order
$\mathbb{E}\|x - \widetilde{x}\|^2 \leq K\delta^2/N^2$ under general conditions on the measurement vectors.
We also prove refined MSE bounds when the measurement vectors are i.i.d. uniformly distributed on the unit-sphere $\mathbb{S}^{d-1}$
and, in particular, show that in this case the constant $K$ is dominated by $d^3$, the cube of the ambient dimension.
The proofs involve an analysis of random polytopes using coverage processes on the sphere.
\end{abstract}

\maketitle

\section{Introduction}

We consider the problem of estimating an unknown signal $x\in \R^d$ from a collection of $N \geq d$ noisy linear measurements
\begin{equation} \label{qn-def}
q_n = \langle x, \varphi_n \rangle + \epsilon_n, \ \ \  1 \leq n \leq N,
\end{equation}
where $\{\varphi_n\}_{n=1}^N \subset \R^d$ is a known spanning set for $\R^d$, and where the unknown noise $\{\epsilon_n\}_{n=1}^N$
has been independently drawn according to the uniform distribution on a known interval $[-\delta, \delta]$.
{\em Consistent reconstruction} is a method for producing an estimate $\widetilde{x} \in \R^d$ of $x$ from the noisy measurements \eqref{qn-def}.
Consistent reconstruction selects $\widetilde{x}$ as any solution to the linear feasibility problem
\begin{equation} \label{cr-def}
\forall \ 1 \leq n \leq N, \ \ \ \ | \langle \widetilde{x}, \varphi_n \rangle - q_n | \leq \delta.
\end{equation}
In other words, \eqref{cr-def} simply seeks an estimate $\widetilde{x}$ that is consistent with the knowledge that the noise is bounded in $[-\delta,\delta]$.
Our main contribution in this work is to provide sharp bounds on the mean squared estimation error associated with consistent reconstruction and to quantify how accurately \eqref{cr-def} recovers $x$ from the measurements  \eqref{qn-def} as a function of the number of measurements $N$ and the dimension $d$.  
In our analysis, $\{\varphi_n\}_{n=1}^N\subset \R^d$ will be i.i.d. random vectors drawn from a suitable distribution on the unit-sphere $\mathbb{S}^{d-1}$, and special attention will be given to the case when each $\varphi_n$ is uniformly distributed on $\mathbb{S}^{d-1}$.

Consistent reconstruction has received particular attention in the signal processing literature as a method for
recovering signals from quantized samples.   
Deterministic round-off errors that arise in quantization are frequently modeled using uniform noise.  For example, the use of uniform noise models in quantization is typically justified by dithering, \cite{RG}, or with high resolution asymptotics as the quantizer step-size approaches zero, \cite{JWW}.   
Consistent reconstruction and its variants have been shown to be an effective method for signal recovery in memoryless scalar quantization (MSQ), \cite{GVT, VT94b, RG, Z01, Z03}, Sigma-Delta ($\Sigma\Delta$) quantization, \cite{VT94a}, compressed sensing, \cite{JHF}, and finite rate of innovation sampling, \cite{JB06}.  
A key point often observed in practice is that when compared with linear reconstruction, consistent reconstruction can reduce the mean squared reconstruction error by an extra multiplicative factor that scales inversely with the sampling rate.

The structure of uniformly distributed noise plays an essential role in our analysis of consistent reconstruction \eqref{cr-def}.  
It is useful to note that estimation with uniform noise in \eqref{qn-def} falls outside of several classical approaches to estimation theory.
For example, asymptotic normality theorems in maximum likelihood estimation (MLE) typically
require suitable smoothness assumptions on the underlying noise distribution which uniform noise does not satisfy, e.g., \cite{F}.  
Similarly, the multiparameter Cramer-Rao bound gives lower bounds for minimum variance unbiased estimation, but also requires suitable regularity on the 
noise distribution.
Finally, recall that linear estimation commonly yields mean squared error (MSE) bounds of order $1/N$,
and this is, for example, optimal for Gaussian noise with respect to the Cramer-Rao bound.  However,  when dealing with uniform noise, linear estimation is
typically sub-optimal and it is possible to provide more accurate recovery than MSE of order $1/N$.
 We provide a rigorous analysis of the mean squared error in consistent reconstruction for the general estimation problem \eqref{qn-def}
with suitable random measurement vectors $\{\varphi_n\}_{n=1}^N \subset \mathbb{R}^d$, and we prove that the
 mean squared error is of the optimal order $K/N^2$ with precise control on the constant $K$.  A sample consequence of this is that $N = \mathcal{O}(d^{3/2})$ random measurements will suffice to achieve highly accurate mean squared reconstruction error (compared to $\mathcal{O}(d^2)$ measurements with linear reconstruction).

Consistent reconstruction \eqref{cr-def} has a simple geometrical interpretation.  
Since the bounded noise satisfies $|\epsilon_n| \leq \delta$, each noisy measurement $q_n$ gives the information that the unknown true signal $x\in\R^d$ lies
in the $2\delta$-thick slab
\begin{equation}\label{Sn-def} S_n = S_n(q_n, \varphi_n)= \{ u \in \R^d : | \langle u, \varphi_n \rangle - q_n | \leq \delta \}.\end{equation}
Consequently, the consistency equations \eqref{cr-def} are equivalent to requiring that $\widetilde{x}\in\R^d$ lies in the {\em consistent reconstruction polytope} defined by
\begin{equation} \label{QN-def} Q_N =  \bigcap_{n=1}^N S_n.\end{equation}
The assumption that $\{\varphi_n\}_{n=1}^N$ spans $\R^d$ ensures that $Q_N$ is a compact set.  Moreover, $Q_N$ is almost surely a nondegenerate polytope with nonempty interior.  Since the true signal $x$ is always contained in the polytope $Q_N$, it is clear that the system \eqref{cr-def} is feasible.

The main object of interest in this article will be the worst case error associated with consistent reconstruction.
Recalling that \eqref{cr-def} generally has infinitely many solutions, the worst case error may be defined as follows.
If $\widetilde{x} \in Q_N$ is any solution to the consistent reconstruction system \eqref{cr-def}, then
the error $(x - \widetilde{x})$ lies in the following {\em error polytope} 
\begin{equation} \label{PN-def}
P_N =  \bigcap_{n=1}^N E_n
\end{equation}
where
\begin{equation} \label{En-def}
E_n = \{ u \in \R^d : | \langle u, \varphi_n \rangle - \epsilon_n | \leq \delta \}.
\end{equation}
The polytope $P_N$ is obtained by translating $Q_N$ to the origin by the vector $x$.
In terms of the polytopes $P_N$ and $Q_N$, the worst case error associated with consistent reconstruction can be defined as
\begin{equation} \label{WN-def}
W_N = \max \{ \| u \| :  u \in P_N \} = \max \{ \| u - x \| : u \in Q_N \}.
\end{equation}
Viewed geometrically, the worst case error $W_N$ is precisely the radius of the smallest closed ball centered at 0 that contains the error polytope $P_N$.

\subsection*{Overview and main results}
The main contribution of this article is to prove that the expected worst case error squared for consistent reconstruction is of the optimal order
\begin{equation} \label{overview-eq}
\mathbb{E}|W_N|^2 \leq \frac{K\delta^2}{N^2},
\end{equation}
where the constant $K>0$ depends on the distribution of the random vectors $\{\varphi_n\}_{n=1}^N$.
Our first main result, Theorem \ref{main-thm1}, proves the mean squared error bound \eqref{overview-eq} under general assumptions on the i.i.d. unit-norm random vectors 
$\{\varphi_n\}_{n=1}^N \subset \mathbb{S}^{d-1}$.  Our second main result, Theorem \ref{mainthm2}, proves a refined version of \eqref{overview-eq} 
when $\{\varphi_n\}_{n=1}^N \subset \mathbb{S}^{d-1}$ are i.i.d. random vectors that are uniformly distributed on $\mathbb{S}^{d-1}$ and, in particular, shows that
the constant $K$ in \eqref{overview-eq} is dominated by $d^3$, the cube of the ambient dimension.

The paper is organized as follows.  Section \ref{background:sec} provides background on estimation with uniform noise, and also discusses preliminaries and notation
such as surface measure on the sphere and geodesic $\epsilon$-nets.  Section \ref{unidirection-sec} analyzes the size of the error polytope $P_N$ in a fixed direction.
Section \ref{cover-sec} provides necessary background and results on coverage processes on the sphere which will be used in the proofs of our main theorems.
Section \ref{genthm-sec} states and proves our first main theorem, Theorem \ref{main-thm1}, which shows that consistent reconstruction
satisfies \eqref{overview-eq} under general assumptions.  
Section \ref{unif-sec} states and proves our second main theorem, Theorem \ref{mainthm2}, which shows that for random vectors that are uniformly 
distributed on the unit-sphere consistent reconstruction satisfies \eqref{overview-eq} with a constant $K$ that is dominated by $d^3$.

\section{Background and notation}
\label{background:sec}
\subsection{Estimation with uniform noise and consistent reconstruction} \label{est-background:sec}

In this section we briefly recall some background on estimation with uniform noise and consistent reconstruction.  This will help provide perspective on our main
results.

Begin by recalling linear reconstruction. Let $N \geq d$.  Suppose that $\{\varphi_n\}_{n=1}^N \subset \R^d$ spans $\R^d$ and that $\{f_n\}_{n=1}^N \subset \R^d$ is
any dual frame satisfying
\begin{equation}
\forall x \in \R^d, \ \ \  x = \sum_{n=1}^N \langle x, \varphi_n \rangle f_n.
\end{equation}
Suppose for the moment that $\{\epsilon_n\}_{n=1}^N$ are simply independent zero mean random variables with variance $\mathbb{E}|\epsilon_n|^2 = \sigma^2$,
and that one wishes to estimate $x\in \R^d$ from the noisy measurements $q_n = \langle x, \varphi_n \rangle  + \epsilon_n$, $1 \leq n \leq N$.
Using the dual frame $\{f_n\}_{n=1}^N$ to linearly reconstruct $\widetilde{x}\in\R^d$ by
\begin{equation}
\widetilde{x} = \sum_{n=1}^N q_n f_n
\end{equation}
yields the estimation error
\begin{equation} \label{linrec-mse}
\forall x \in \R^d, \ \ \  \mathbb{E}\| x- \widetilde{x}\|^2 =  \sigma^2 \sum_{n=1}^N \| f_n\|^2.
\end{equation}
If each $\varphi_n$ is assumed to be unit-norm, $\|\varphi_n\|=1$, then the MSE in \eqref{linrec-mse} is bounded below by
\begin{equation} \label{utf-mse}
\forall x \in \R^d, \ \ \ \mathbb{E}\|x-\widetilde{x}\|^2 \geq \frac{d^2 \sigma^2}{N},
\end{equation}
for example, see \cite{GKK}.  Moreover, if each $\| \varphi_n\| =1$ then equality holds in \eqref{utf-mse} precisely when $\{\varphi_n\}_{n=1}^N$
is a unit-norm tight frame for $\R^d$ and when $f_n = \frac{d}{N} \varphi_n$ is taken as the associated canonical dual frame, e.g., see \cite{GKK}.  
The case of unit-norm tight frames yields the mean squared error $\mathbb{E}\| x - \widetilde{x}\|^2 = {d^2 \sigma^2}/{N}$.

It is natural to ask how much one can improve on the standard $1/N$ mean squared accuracy \eqref{utf-mse} if one uses nonlinear reconstruction.
We are specifically interested in the case of uniform noise, and assume henceforth that $\{ \epsilon_n\}_{n=1}^N$ are i.i.d. uniform random variables on $[-\delta, \delta]$.

We begin by mentioning a Bayesian lower bound due to Rangan and Goyal, \cite{RG}.
Suppose that the vectors $\{ \varphi_n \}_{n=1}^{\infty} \subset \R^d$ are unit-norm and that $x\in \R^d$ is an absolutely continuous random vector and that 
$q_n = \langle x, \varphi_n \rangle + \epsilon_n$. 
Let $\widetilde{x}_N = \widetilde{x}_N( \{q_n\}_{n=1}^N, \{ \varphi_n \}_{n=1}^N)$ be any estimator which maps
each observed input $(\{q_n\}_{n=1}^N, \{ \varphi_n \}_{n=1}^N)$ to an estimate $\widetilde{x}_N\in\R^d$ of the signal $x\in \R^d$.  
It was shown in \cite{RG} that the mean squared error is bounded below in the following manner:
\begin{equation} \label{rg-lower-bnd}
{\rm lim \thinspace inf}_{N \to \infty} \ N^2 \thinspace \mathbb{E} \| x - \widetilde{x}_N\|^2 >0.
\end{equation}
Unlike \eqref{utf-mse} the expectation in \eqref{rg-lower-bnd} is taken over both a random signal $x$ and the noise $\{\epsilon_n\}_{n=1}^N$.
The lower bound \eqref{rg-lower-bnd} 
shows that for estimation in the setting of uniform noise one cannot expect MSE that is more accurate than $1/N^2$.  Related lower bounds for
quantization problems can be found in \cite{GVT}.

In \cite{RG}, Rangan and Goyal proposed an estimation algorithm for \eqref{qn-def} that achieves the optimal $1/N^2$ error rate.
Their algorithm starts with an arbitrary $\widetilde{x}_0 \in \R^d$, and
iteratively produces estimates $\widetilde{x}_n \in \R^d$ with the following soft-thresholding algorithm
\begin{equation} \label{rg-alg}
\widetilde{x}_n = \widetilde{x}_{n-1} + \frac{\varphi_n \thinspace T_{\delta} (q_n - \langle \widetilde{x}_{n-1}, \varphi_n \rangle)}{\|\varphi_n\|^2},
\end{equation}
where the soft-thresholding function $T_\delta:\mathbb{R} \to \mathbb{R}$ is defined by
\begin{equation} \label{soft-thresh}
T_{\delta}(t) = 
\begin{cases}
t - \delta, & \hbox{ if } t > \delta,\\
 0, & \hbox{ if }  |t| \leq  \delta,\\
t + \delta, & \hbox{ if } t < -\delta. 
\end{cases}
\end{equation}
The error analysis of the algorithm \eqref{rg-alg} in \cite{RG} assumed that $\{ \varphi_n\}_{n=1}^N$ are independent identically distributed versions 
of a random vector $\varphi$ satisfying the following condition
\begin{equation} \label{rg-exp-cond}
\exists \alpha>0, \ \forall x \in \mathbb{S}^{d-1} , \ \ \ \mathbb{E}|\langle x, \varphi \rangle| \geq \alpha>0.
\end{equation}
It was proven in \cite{RG} that if the i.i.d. random vectors $\{ \varphi_n \}_{n=1}^N$ satisfy \eqref{rg-exp-cond} then
\begin{equation}
\forall s<1, \ \forall x \in \R^d, \ \ \ \lim_{N\to \infty} N^{2s} \| x - \widetilde{x}_N\|^2 =0, \ \ \ \hbox{ almost surely.}
\end{equation}
Moreover, the mean squared error was later proven to satisfy $\mathbb{E}\|x- \widetilde{x}_N\|^2 \leq C/N^2$ for a suitable constant $C>0$ in \cite{AP-RG}.
The algorithm \eqref{rg-alg} need not produce globally consistent estimates but instead employs local updates that can be sensitive to ordering issues.

Consistent reconstruction \eqref{cr-def} provides a maximum likelihood estimate (MLE) for \eqref{qn-def}.  
Let $q=q(x)$ be the $N \times 1$ random vector that is parametrized by $x$ and whose $n$th entry is $q_n  = \langle x, \varphi_n\rangle + \epsilon_n$.
The associated likelihood function is given by $\mathcal{L}(x | q) = \chi_{Q_N}(x)$,
where $\chi_{Q_N}$ is the indicator function of the consistent reconstruction polytope $Q_N$ in  \eqref{QN-def}.  
Thus, the likelihood function $\mathcal{L}(x|q)$
is maximized precisely when $x$ is a consistent estimate satisfying \eqref{cr-def}.
Moreover, asymptotic normality results from MLE do not apply here since $\mathcal{L}(x|q)$ does not satisfy the smoothness assumptions that
are typically needed, \cite{F}.

We conclude this section with the following simple one-dimensional example to provide intuition into the desired $1/N^2$ error rate for consistent reconstruction. 
\begin{example}[Consistent reconstruction in one dimension] \label{cr-1dim-ex}
Let $x \in \R$ and $q_n = x + \epsilon_n$, where $\{\epsilon_n \}_{n=1}^N$ are i.i.d. uniform random variables on $[-\delta,\delta]$.
Consider the problem of estimating $x$ from the noisy observations $\{q_n\}_{n=1}^N$.
In this one-dimensional example, consistent reconstruction simply selects any estimate $\widetilde{x}_N$ that lies in the interval $I_N=[A_N,B_N]$
where
$$A_N  = \max \{ q_n - \delta : 1 \leq n \leq N\} \ \ \ \ \ \hbox{ and } \ \ \ \ \ B_N = \min \{ q_n + \delta: 1 \leq n \leq N\},$$
and the associated worst case estimation error is given by
$$w_N  = \max \{ |x - A_N|, |x - B_N| \}.$$
Elementary order statistics computations show that
$$\mathbb{E} |x-A_N|^2 =  \mathbb{E} |x-B_N|^2 =\frac{8\delta^2}{(N+1)(N+2)}$$
and that the worst error $w_N$ satisfies the following mean squared error bound
$$\mathbb{E} |w_N|^2 = \frac{14\delta^2}{(N+1)(N+2)}.$$
An important technical issue for analyzing \eqref{cr-def} in $\mathbb{R}^d$ will be that the geometry of the error polytope \eqref{PN-def} becomes non-trivial in higher dimensions.
\end{example}

\subsection{Preliminaries and notation}  In this section we collect some necessary notation and background results concerning measure on the sphere and epsilon-nets.

We shall denote an open spherical cap on the unit-sphere $\mathbb{S}^{d-1}\subset \R^d$ with center $\varphi \in \mathbb{S}^{d-1}$ and angular radius 
$0<\theta<\pi$ by
\begin{equation}
{\rm Cap}(\varphi, \theta) = \{ u \in \mathbb{S}^{d-1} : \langle u, \varphi \rangle > \cos \theta \}.
\end{equation}
If $0<\theta<\pi/2$ then the relative measure (normalized with respect to $\mathbb{S}^{d-1}$) of ${\rm Cap}(\varphi, \theta)$
is given by, e.g., \cite{BCL10}, 
\begin{equation} \label{rel-cap-meas}
r_{d-1}(\theta) = \frac{\Gamma(\frac{d}{2})}{\sqrt{\pi}\Gamma(\frac{d-1}{2})} \int_0^{\theta} (\sin u)^{d-2} d u.
\end{equation}
It is useful to note, e.g., \cite{W46}, that when $d\geq 2$ the
 constant 
 \begin{equation} \label{CdDef}
 C_d = \frac{\Gamma(\frac{d}{2})}{\sqrt{\pi}\Gamma(\frac{d-1}{2})}
 \end{equation}
 satisfies
 \begin{equation}\label{C-d-estimate}
\sqrt{1 - \frac{1}{d} \thinspace} \thinspace \sqrt{\frac{d-1}{2\pi}} \leq C_d \leq \sqrt{\frac{d-1}{2\pi}}
\ \ \ \hbox{ and } \ \ \ \lim_{d\to\infty}\frac{C_d}{\sqrt{d}}=\frac{1}{\sqrt{2\pi}}.
\end{equation}

Fix any $x_0 \in \mathbb{S}^{d-1}$ and let the random vector $\varphi\in \mathbb{R}^{d-1}$ be uniformly distributed on the unit-sphere $\mathbb{S}^{d-1}$.
Note that, by rotation invariance, the distribution of the random variable $Z= |\langle x_0, \varphi \rangle|$ does not depend on $x_0$.
The pdf of the random variable $Z$ is given by, e.g., \cite{Stam82},
\begin{equation} \label{Zpdf}
f_Z(z) = 
\begin{cases}
2C_d (1-z^2)^{\frac{d-3}{2}}, & \hbox{ if } z \in [0,1],\\
0, & \hbox{ if } z \not\in [0,1].
\end{cases}
\end{equation}

The geodesic distance between between two points $x,y \in \mathbb{S}^{d-1}$ on the sphere will be denoted by
$$d(x,y) = \arccos( \langle x, y \rangle).$$
Given $\epsilon>0$, we say that a set $\mathcal{N}_{\epsilon} \subset \mathbb{S}^{d-1}$ is a geodesic $\epsilon$-net of $\mathbb{S}^{d-1}$ if 
$$\forall x \in \mathbb{S}^{d-1}, \ \exists z \in \mathcal{N}_{\epsilon}, \ \ \hbox{ such that } \ \ d(x,z) \leq \epsilon.$$
A standard argument shows that if $d\geq 2$ then there exist geodesic $\epsilon$-nets $\mathcal{N}_{\epsilon}$ of $\mathbb{S}^{d-1}$ with cardinality satisfying 
\begin{align} \label{eps-net-card-eq}
\# \left( \mathcal{N}_{\epsilon}  \right) &\leq \frac{1}{r_{d-1}(\epsilon/2)}
 \leq \frac{\sqrt{\pi} \thinspace \Gamma(\frac{d-1}{2})}{\Gamma(\frac{d}{2})} \left( \frac{\pi}{2}\right)^{d-2} \frac{(d-1)}{(\epsilon/2)^{d-1}}
 \leq \left( \frac{8}{\epsilon}\right)^{d-1}.
\end{align}
For example, if $\mathcal{N}_{\epsilon}$ is a maximal $\epsilon$-separated (with respect to geodesic distance) subset of $\mathbb{S}^{d-1}$ then $\mathcal{N}_{\epsilon}$ is a geodesic
$\epsilon$-net for $\mathbb{S}^{d-1}$ and satisfies \eqref{eps-net-card-eq}.

\section{Error polytope size in a fixed direction} \label{unidirection-sec}

In this section we study the radial size of the error polytope $P_N$ in a {\em fixed} direction.  
Given a unit-vector $\psi \in \mathbb{S}^{d-1}$, define the radial size of the error polytope in the direction $\psi$ by
\begin{equation}
R_N(\psi) = \max \{ r \geq 0 : r\psi \in P_N \}.
\end{equation}
Note that $W_N \geq R_N$ since the worst case error \eqref{WN-def} satisfies $$W_N = \sup \{R_N(\psi): \psi \in \mathbb{S}^{d-1} \}.$$

The next result follows immediately from Example \ref{cr-1dim-ex} and provides a simple lower bound on $\mathbb{E}|R_N(\psi)|^2$
for a general choice of $\{\varphi_n\}_{n=1}^N\subset \mathbb{S}^{d-1}$.

\begin{proposition} \label{GenRbndProp}
Let $\{\varphi_n \}_{n=1}^N \subset \mathbb{S}^{d-1}$ be arbitrary and let $\psi \in \mathbb{S}^{d-1}$.
Then the worst case error in the direction $\psi$ for consistent reconstruction satisfies
$$\forall N, \ \ \ \mathbb{E}|R_N(\psi)|^2 \geq \frac{8\delta^2}{(N+1)(N+2)}.$$
\end{proposition}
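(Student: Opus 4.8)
The plan is to reduce the $d$-dimensional problem to the one-dimensional situation of Example \ref{cr-1dim-ex} by projecting everything onto the line spanned by the fixed direction $\psi$. The key observation is that the constraint defining $E_n$, namely $|\langle u, \varphi_n\rangle - \epsilon_n| \leq \delta$, becomes, when restricted to points of the form $u = r\psi$, the one-variable constraint $|r\langle \psi, \varphi_n\rangle - \epsilon_n| \leq \delta$. So $R_N(\psi) = \max\{r \geq 0 : |r\langle\psi,\varphi_n\rangle - \epsilon_n| \leq \delta \ \text{for all } n\}$, which is exactly the upper endpoint of an intersection of $N$ one-dimensional slabs of the type appearing in Example \ref{cr-1dim-ex}, after rescaling.

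First I would set $a_n = \langle \psi, \varphi_n\rangle$ and observe that since $\|\psi\| = \|\varphi_n\| = 1$ we have $|a_n| \leq 1$. Each consistency condition $|r a_n - \epsilon_n| \leq \delta$ is equivalent (when $a_n \neq 0$, which holds almost surely) to $r$ lying in an interval; the relevant upper bound on $r$ is $(\epsilon_n + \delta)/a_n$ when $a_n > 0$ and $(\epsilon_n - \delta)/a_n$ when $a_n < 0$. In either case, because $|a_n| \leq 1$, the distance from $0$ to the relevant endpoint is at least $\delta + \mathrm{sign}(a_n)\epsilon_n / |a_n| \geq \delta + \mathrm{sign}(a_n)\epsilon_n$ in absolute terms; more cleanly, I would note that $R_N(\psi) \geq \min_n (\delta - |\epsilon_n|)/|a_n| \geq \min_n (\delta - |\epsilon_n|) = \delta - \max_n |\epsilon_n|$, since each slab contains $0$ and extends distance at least $(\delta - |\epsilon_n|)/|a_n|$ from $0$ in the $\psi$ direction. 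Actually the cleanest route: the point $r\psi$ is in $E_n$ whenever $|r a_n| \leq \delta - |\epsilon_n|$, so certainly whenever $r \leq \delta - |\epsilon_n|$ (using $|a_n|\leq 1$); hence $R_N(\psi) \geq \delta - \max_{1\leq n\leq N}|\epsilon_n|$.

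Then I would compare with the one-dimensional worst case error $w_N$ from Example \ref{cr-1dim-ex}: there, $w_N = \max\{|x - A_N|, |x-B_N|\}$ where, taking $x = 0$ without loss of generality by translation invariance, $B_N = \min_n(\epsilon_n + \delta) = \delta - \max_n(-\epsilon_n)$ and $A_N = \max_n(\epsilon_n - \delta) = -(\delta - \max_n \epsilon_n)$; thus $w_N = \delta - \min_n |\epsilon_n|$... this is not quite right either, so the right comparison is to note that $\delta - \max_n|\epsilon_n| = \min(\delta - \max_n\epsilon_n, \delta + \min_n\epsilon_n) = \min(-A_N, B_N)$, and since the $\epsilon_n$ are symmetric about $0$, the random variables $-A_N$ and $B_N$ are identically distributed, each with the law from Example \ref{cr-1dim-ex}. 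In particular $\mathbb{E}|{-A_N}|^2 = \mathbb{E}|B_N|^2 = 8\delta^2/((N+1)(N+2))$. The remaining point is that $\min(-A_N, B_N)$ has second moment at least as large as... no: $\min$ makes it smaller, so I must instead use that $R_N(\psi) \geq -A_N$ directly (taking the bound from the single worst overshoot) and hence $\mathbb{E}|R_N(\psi)|^2 \geq \mathbb{E}|{-A_N}|^2 = 8\delta^2/((N+1)(N+2))$; here $-A_N$ is precisely the upper endpoint of the intersection of the one-sided slabs $r a_n \leq \delta - \mathrm{sign}(a_n)\epsilon_n$, i.e. it equals $\min_n (\delta - \mathrm{sign}(a_n)\epsilon_n)/|a_n|$ restricted to the $a_n>0$ indices...

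The main obstacle is precisely this bookkeeping: I want a clean one-sided comparison so that the $\min$ over constraints lands on a single order statistic whose distribution I already computed, rather than on a $\min$ of two correlated such quantities which would only give a smaller second moment. The honest fix is to observe that $R_N(\psi)$ dominates the one-dimensional worst case error for the auxiliary data $\tilde q_n = \langle r\psi, \varphi_n\rangle + \epsilon_n$ after dividing through by $|a_n|$: since $|a_n|\leq 1$, dividing by $|a_n|$ only enlarges the feasible interval, so $R_N(\psi)$ is bounded below by the corresponding quantity with $|a_n|$ replaced by $1$, which is exactly $\min_n(\delta + \eta_n)$ where $\eta_n = -\mathrm{sign}(a_n)\epsilon_n$ are i.i.d. uniform on $[-\delta,\delta]$ (conditioning on the $\varphi_n$, the signs are fixed and uniform noise is symmetric). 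Thus $R_N(\psi) \geq \delta + \min_n \eta_n \stackrel{d}{=} B_N$ (reflecting), and Example \ref{cr-1dim-ex} gives $\mathbb{E}|B_N|^2 = 8\delta^2/((N+1)(N+2))$, completing the proof. I would present this last paragraph as the actual argument, with the reduction to fixed signs and uniform noise symmetry as the one genuinely nontrivial step.
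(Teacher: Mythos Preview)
Your final paragraph is essentially the paper's own proof: write $R_N(\psi)=\min_n x_n$ with $x_n=(\delta+\operatorname{sign}(a_n)\epsilon_n)/|a_n|$, use $|a_n|\le 1$ to get $x_n\ge \delta+\operatorname{sign}(a_n)\epsilon_n$, note these lower bounds are i.i.d.\ uniform on $[0,2\delta]$ by symmetry of the $\epsilon_n$, and invoke Example~\ref{cr-1dim-ex}. One small slip: your $\eta_n$ should be $+\operatorname{sign}(a_n)\epsilon_n$, not $-\operatorname{sign}(a_n)\epsilon_n$; this is harmless distributionally but worth fixing, and you should drop the exploratory dead ends before the last paragraph.
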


\begin{proof}
Note that
$$R_N(\psi) = \min \{ x_n : 1 \leq n \leq N \},$$
where
$$x_n = 
\begin{cases}
{(\epsilon_n+\delta)}/{|\langle \varphi_n, \psi \rangle|  }, & \hbox{ if } \langle \varphi_n, \psi \rangle \geq 0, \\
{(\delta - \epsilon_n)}/{|\langle \varphi_n, \psi \rangle|}, & \hbox{ if } \langle \varphi_n, \psi \rangle < 0. \\
\end{cases}
$$
Since $|\langle \varphi_n, \psi \rangle| \leq 1$ and since $(\epsilon_n+\delta)$ and $(\delta -\epsilon_n)$ are both uniformly distributed on $[0, 2\delta]$ it follows that
$x_n \geq \xi_n$, where $\xi_n$ is uniformly distributed on $[0,2\delta]$.  The proof now follows from Example \ref{cr-1dim-ex}.
\end{proof}

In the remainder of this section, we study $R_N(\psi)$ in the case when the vectors $\{ \varphi_n \}_{n=1}^N \subset \R^d$ used to define the error polytope $P_N$ are i.i.d.
uniformly distributed random vectors on the unit-sphere $\mathbb{S}^{d-1}$.  In this case, rotation invariance implies
that the distribution of $R_N=R_N(\psi)$ is independent of $\psi$.
We will make use of the following lemma whose proof follows from similar steps as in Proposition \ref{GenRbndProp}.

\begin{lemma} \label{YN-lemma}
Let $\xi$ be a uniform random variable on $[0, 2\delta]$, and define the random variable $Z=|\langle e_0, \varphi \rangle|$, where
$e_0 = (1, 0, 0, \cdots, 0) \in \R^d$ and the random vector $\varphi$ is uniformly distributed on unit-sphere $\mathbb{S}^{d-1}$.
Let $\xi$ and $Z$ be independent and define the random variable $X= {\xi}/{Z}.$
Let $\{X_n\}_{n=1}^N$ be $N$ independent versions of the random variable $X$ and consider the associated order statistic
$$Y_N = \min \{ X_n : 1 \leq n \leq N \}$$
The random variable $R_N$ has the same distribution as the random variable $Y_N$.
\end{lemma}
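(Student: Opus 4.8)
The plan is to compute the radial size $R_N(\psi)$ directly and recognize it as a minimum of i.i.d.\ copies of $X = \xi/Z$. By rotation invariance, fix $\psi = e_0 = (1,0,\ldots,0)$; since $\{\varphi_n\}_{n=1}^N$ are i.i.d.\ uniform on $\mathbb{S}^{d-1}$, the distribution of $R_N(e_0)$ is the same as that of $R_N(\psi)$ for any $\psi$. First I would repeat the computation from the proof of Proposition~\ref{GenRbndProp}: the point $r e_0$ lies in $E_n$ iff $|r\langle \varphi_n, e_0 \rangle - \epsilon_n| \leq \delta$, and solving for the largest admissible $r$ in each slab gives
\begin{equation*}
R_N(e_0) = \min\{ X_n : 1 \leq n \leq N\}, \quad
X_n = \begin{cases} (\epsilon_n + \delta)/|\langle \varphi_n, e_0 \rangle|, & \langle \varphi_n, e_0 \rangle \geq 0,\\ (\delta - \epsilon_n)/|\langle \varphi_n, e_0 \rangle|, & \langle \varphi_n, e_0 \rangle < 0. \end{cases}
\end{equation*}

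Next I would identify the distribution of each $X_n$. Write $Z_n = |\langle \varphi_n, e_0 \rangle|$ and observe that the numerator is $\epsilon_n + \delta$ on the event $\{\langle \varphi_n, e_0\rangle \geq 0\}$ and $\delta - \epsilon_n$ on its complement. Both $\epsilon_n + \delta$ and $\delta - \epsilon_n$ are uniform on $[0, 2\delta]$, so the key point is to check that, conditionally on the sign of $\langle \varphi_n, e_0 \rangle$ and on $Z_n$, the numerator is uniform on $[0,2\delta]$ and independent of $Z_n$. This follows because $\epsilon_n$ is independent of $\varphi_n$, and because for a uniform $\epsilon_n$ on $[-\delta,\delta]$ both $\epsilon_n + \delta$ and $\delta - \epsilon_n$ are uniform on $[0,2\delta]$ regardless of any conditioning that depends only on $\varphi_n$. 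Hence one may replace the sign-dependent numerator by a single auxiliary random variable $\xi_n$ uniform on $[0,2\delta]$, independent of $Z_n$, so that $X_n$ has the same distribution as $\xi_n / Z_n$, which is exactly the distribution of $X = \xi/Z$ in the lemma. Moreover $Z_n = |\langle \varphi_n, e_0\rangle|$ has the distribution of $Z = |\langle e_0, \varphi\rangle|$ with $\varphi$ uniform on $\mathbb{S}^{d-1}$.

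Finally, since the pairs $(\epsilon_n, \varphi_n)$ are independent across $n$, the constructed variables $X_n$ are independent, so $R_N(e_0) = \min\{X_n : 1 \leq n \leq N\}$ has the same distribution as $Y_N = \min\{X_n : 1 \leq n \leq N\}$ built from $N$ independent copies of $X$. Combining with rotation invariance gives the claim. I expect the main (minor) obstacle to be stating the sign/conditioning argument cleanly: one must be careful that the numerator's uniformity on $[0,2\delta]$ and its independence from $Z_n$ survive the conditioning on the sign of $\langle \varphi_n, e_0\rangle$, which is why it is cleanest to argue via the joint law of $(\epsilon_n, \varphi_n)$ and note that the map $(\epsilon,\varphi) \mapsto (\text{numerator}, Z)$ pushes it forward to the product of Uniform$[0,2\delta]$ with the law of $Z$. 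Everything else is a direct transcription of the argument in Proposition~\ref{GenRbndProp}.
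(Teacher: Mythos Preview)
Your proposal is correct and matches the paper's intended argument exactly: the paper states that the proof ``follows from similar steps as in Proposition~\ref{GenRbndProp},'' and you have spelled out precisely those steps---the expression $R_N(\psi)=\min_n X_n$ with the sign-dependent numerator, together with the observation that $\delta + \mathrm{sgn}(\langle\varphi_n,e_0\rangle)\,\epsilon_n$ is Uniform$[0,2\delta]$ independently of $Z_n$ because $\epsilon_n$ is symmetric and independent of $\varphi_n$. Your remark about arguing via the pushforward of the joint law of $(\epsilon_n,\varphi_n)$ is the clean way to handle the conditioning, and nothing further is needed.
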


\begin{theorem} \label{RadialThmDimd}
Let $d \geq 2$.  Suppose that $\{\varphi_n\}_{n=1}^N\subset \mathbb{S}^{d-1}$ are i.i.d. uniformly distributed random vectors on $\mathbb{S}^{d-1}$.
Then for $N \geq 3$ the worst case error in the direction $\psi \in \mathbb{S}^{d-1}$ for consistent reconstruction satisfies
\begin{equation} \label{RadThmDimdBnd}
\mathbb{E}|R_N(\psi)|^2 = \frac{2\delta^2(d-1)^2}{(C_d)^2(N+1)(N+2)} + 2\delta^2\alpha_{d,N},
\end{equation}
where $C_d$ is as in \eqref{CdDef} and
\begin{equation}
 - \left( \frac{2C_d}{(d-1)} \right) \left( 1 - \frac{C_d}{(d-1)} \right)^{N+1}
 \leq  \alpha_{d,N} \leq  54 (C_d)^2 \left( 1 - \frac{2 C_d}{(d-1)} \right)^N.
\end{equation}
Note that by \eqref{C-d-estimate}, $0< (1 - \frac{C_d}{d-1})<1$ and $0 < (1 - \frac{2C_d}{d-1}) <1$.
\end{theorem}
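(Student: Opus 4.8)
The plan is to use Lemma~\ref{YN-lemma} to replace $R_N(\psi)$ by $Y_N=\min_{1\le n\le N}X_n$ and to exploit the layer‑cake identity $\mathbb{E}|Y_N|^2=\int_0^\infty 2t\,\Pr(Y_N>t)\,dt=\int_0^\infty 2t\,G(t)^N\,dt$, where $G(t):=\Pr(X>t)$ and the last equality uses independence of the $X_n$. The crucial input is an exact formula for $G$. Conditioning on $Z$ and using that $\xi$ is uniform on $[0,2\delta]$ gives $G(t)=\mathbb{E}\bigl[(1-tZ/(2\delta))_+\bigr]$; since $Z\le 1$ almost surely the positive part is inactive whenever $0\le t\le 2\delta$, so there
\[
G(t)=1-\frac{t}{2\delta}\,\mathbb{E}[Z]=1-\frac{C_d}{d-1}\cdot\frac{t}{\delta},
\]
where $\mathbb{E}[Z]=\int_0^1 z\,f_Z(z)\,dz=\tfrac{2C_d}{d-1}$ is computed from \eqref{Zpdf} by the substitution $u=1-z^2$. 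Write $a:=C_d/(d-1)$, and note $0<2a<1$ by \eqref{C-d-estimate}. For $t>2\delta$, integration by parts gives the complementary formula $G(t)=\frac{t}{2\delta}\int_0^{2\delta/t}F_Z(z)\,dz$, where $F_Z$ is the distribution function of $Z$; only crude estimates of this will be needed.

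Now split the integral at $t=2\delta$: $\mathbb{E}|Y_N|^2=I_N+T_N$ with $I_N=\int_0^{2\delta}2t\,(1-at/\delta)^N\,dt$ and $T_N=\int_{2\delta}^\infty 2t\,G(t)^N\,dt\ge 0$. The substitution $s=at/\delta$ turns $I_N$ into $\frac{2\delta^2}{a^2}\int_0^{2a}s(1-s)^N\,ds$, and evaluating the Beta integral $\int_0^1 s(1-s)^N\,ds=\frac{1}{(N+1)(N+2)}$ together with $\int_{2a}^1 s(1-s)^N\,ds$ (substitute $v=1-s$) yields
\[
I_N=\frac{2\delta^2}{a^2}\cdot\frac{1-(1-2a)^{N+1}\bigl(1+2a(N+1)\bigr)}{(N+1)(N+2)}.
\]
Since $\frac{2\delta^2}{a^2(N+1)(N+2)}=\frac{2\delta^2(d-1)^2}{C_d^2(N+1)(N+2)}$ is precisely the leading term in \eqref{RadThmDimdBnd}, comparing with the desired expansion gives the identity
\[
2\delta^2\alpha_{d,N}=T_N-\frac{2\delta^2}{a^2}\cdot\frac{(1-2a)^{N+1}\bigl(1+2a(N+1)\bigr)}{(N+1)(N+2)},
\]
so both bounds on $\alpha_{d,N}$ reduce to controlling $T_N$.

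The lower bound is then immediate: the elementary inequality $(1-u)_+\ge 1-u$ shows $G(t)\ge 1-at/\delta$ for every $t\ge 0$, hence $G(t)^N\ge(1-at/\delta)^N$ on $[0,\delta/a]$ (and $\delta/a\ge 2\delta$ by \eqref{C-d-estimate}), whence $\mathbb{E}|Y_N|^2\ge\int_0^{\delta/a}2t(1-at/\delta)^N\,dt=\frac{2\delta^2}{a^2(N+1)(N+2)}$. That is, $\alpha_{d,N}\ge 0$, which in particular implies the stated bound $\alpha_{d,N}\ge -\tfrac{2C_d}{d-1}\bigl(1-\tfrac{C_d}{d-1}\bigr)^{N+1}$.

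The substantive part is the upper bound, which by the displayed identity and $T_N\ge$ (the subtracted positive term) reduces to showing $T_N$ is at most an absolute‑constant multiple of $\delta^2 C_d^2(1-2a)^N$. For $t\ge 2\delta$ the formula $G(t)=\frac{t}{2\delta}\int_0^{2\delta/t}F_Z$ yields, on one hand, $G(t)\le G(2\delta)=1-2a$ (monotonicity), and on the other hand a linear‑in‑$\delta/t$ bound $G(t)\le \kappa_d\,\delta/t$: for $d\ge 3$ one uses $(1-z^2)^{(d-3)/2}\le 1$, hence $F_Z(z)\le 2C_d z$ and $\tfrac1c\int_0^c F_Z\le C_d c$, giving $\kappa_d=2C_d$; for $d=2$ one uses convexity of $F_Z$ (equivalently that $g(c)/c$ is increasing, where $g(c)=\tfrac1c\int_0^cF_Z$) to get $\tfrac1c\int_0^c F_Z\le(1-2a)c$, giving $\kappa_d=2(1-2a)$, so that the linear bound already dominates on all of $[2\delta,\infty)$. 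Splitting $[2\delta,\infty)$ at the crossover of the two one‑sided bounds and integrating $\int 2t\,G(t)^N\,dt$ on each piece — a power‑law integral on each, with $N\ge 3$ ensuring convergence at infinity — gives $T_N\le(\text{const})\,\delta^2 C_d^2(1-2a)^{N-2}$; since $1-2a=1-\tfrac{2C_d}{d-1}$ is minimized over $d\ge 2$ at $d=2$, the factor $(1-2a)^{-2}$ is an absolute constant, and tracking the constants yields the bound with the constant $54$. The main obstacle is exactly this tail estimate: one must combine the right two one‑sided estimates for $G(t)$ near $t=2\delta$ (the probabilistic bound $G\le 1-2a$ is essential there, since the linear bound $\kappa_d\delta/t$ alone exceeds $1$ when $d$ is large), and then keep the resulting constants small enough; the remainder of the argument is routine calculus.
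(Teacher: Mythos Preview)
Your proof is correct and follows essentially the same route as the paper: use Lemma~\ref{YN-lemma} and the layer-cake formula, compute $\Pr[X>\lambda]=1-\lambda C_d/(\delta(d-1))$ exactly on $[0,2\delta]$, and control the tail $\int_{2\delta}^\infty$ by combining the constant bound $\Pr[X>\lambda]\le 1-2C_d/(d-1)$ with a linear-in-$\delta/\lambda$ bound and splitting at their crossover. Two minor differences worth noting: your lower bound $\alpha_{d,N}\ge 0$ (from $(1-u)_+\ge 1-u$) is both cleaner and strictly stronger than the paper's stated lower bound; and for the linear tail estimate the paper uses the single inequality $(1-z^2)^{(d-3)/2}\le(1-z^2)^{-1/2}$, valid for all $d\ge 2$ and giving $\Pr[X>\lambda]\le 2\pi C_d\,\delta/\lambda$, which avoids your case split between $d\ge 3$ and $d=2$.
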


\begin{proof}
By Lemma \ref{YN-lemma} we need to estimate the following integral
\begin{equation} \label{RadialThmDimdEq1}
\mathbb{E}|R_N(\psi)|^2 = \mathbb{E}|Y_N|^2 = 2 \int_0^{\infty} \lambda \Pr[Y_N >\lambda] d\lambda =
2 \int_0^{\infty} \lambda \left( \Pr[X >\lambda] \right)^N d\lambda.
\end{equation}

\noindent {\em Step I.}  If $0 \leq \lambda \leq 2\delta$ then using \eqref{Zpdf} gives
\begin{align}
\Pr [ X > \lambda] &= \Pr [ \xi > \lambda Z] = \int_0^1\Pr[\xi >\lambda z] f_Z(z) dz \notag \\
& = \int_0^1 \left( \frac{2\delta - \lambda z}{2\delta} \right) 2 C_d(1-z^2)^{\frac{d-3}{2}} dz \notag \\
&= \left( 1 - \frac{\lambda C_d}{\delta(d-1)} \right).  \label{stepI-eq1}
\end{align}
A computation shows that
\begin{align}
\int_0^{2\delta} & \lambda \left( \Pr[X >\lambda] \right)^N d\lambda = \int_0^{2\delta} \lambda  \left( 1 - \frac{\lambda C_d}{\delta(d-1)} \right)^N d\lambda \notag \\
&= \frac{\delta^2(d-1)^2}{(C_d)^2} \left( \frac{1}{(N+1)(N+2)} - \left( 1 - \frac{2C_d}{(d-1)} \right)^{N+1} \left( \frac{1}{N+1} - \frac{(1 - \frac{2C_d}{(d-1)})}{N+2} \right) \right).
\label{step1-eq2}
\end{align}

\noindent {\em Step II.} If $\lambda \geq 2\delta$ then then \eqref{stepI-eq1} gives
$$\Pr[X>\lambda] \leq \Pr[X>2\delta] \leq  \left( 1 - \frac{2 C_d}{(d-1)} \right).$$
Thus
\begin{align}
\int_{2\delta}^{6\pi\delta C_d} \lambda  \left( \Pr[X >\lambda] \right)^N d\lambda &\leq 
\int_{2\delta}^{6\pi\delta C_d} \lambda \left( 1 - \frac{2 C_d}{(d-1)} \right)^N d\lambda \notag \\
& \leq 18\pi^2\delta^2 (C_d)^2 \left( 1 - \frac{2 C_d}{(d-1)} \right)^N. \label{stepII-eq1}
\end{align}

\noindent {\em Step III.} If $\lambda \geq 2 \delta$ then a computation using \eqref{Zpdf} and $d\geq2$ shows that
\begin{align}
\Pr[X>\lambda] &= 2 C_d \int_0^{2\delta/\lambda} \left( \frac{2\delta - \lambda z}{2\delta}\right) (1-z^2)^{\frac{d-3}{2}} dz \notag \\
& \leq 2 C_d \int_0^{2\delta/\lambda}  (1-z^2)^{-\frac{1}{2}} dz \notag \\
& \leq\frac{2\pi \delta C_d}{\lambda}. \label{stepIII-eq1}
\end{align}
If $d\geq 2$ then \eqref{C-d-estimate} along with $\Gamma(1/2)=1/\sqrt{\pi}$ and $\Gamma(3/2) = \sqrt{\pi}/2$ implies that $\frac{1}{3} \leq (1 - \frac{2C_d}{d-1})$.
Equation \eqref{stepIII-eq1} implies that
\begin{align}
\int_{6\pi\delta C_d}^{\infty} \lambda \left( \Pr[X>\lambda] \right)^N d \lambda
& \leq \int_{6\pi\delta C_d}^{\infty} \lambda \left(  \frac{2 \pi \delta C_d}{\lambda} \right)^N d\lambda \notag \\
& = \left( \frac{1}{3} \right)^N \frac{36\pi^2 \delta^2 (C_d)^2}{(N-2)}\notag \\
& \leq 36 \pi^2\delta^2 (C_d)^2 \left( 1 - \frac{2 C_d}{(d-1)} \right)^N. \label{stepIII-eq2}
\end{align}
Combining \eqref{step1-eq2}, \eqref{stepII-eq1}, \eqref{stepIII-eq2} and \eqref{RadialThmDimdEq1} now yields the desired conclusion \eqref{RadThmDimdBnd}.
\end{proof}

\begin{corollary} \label{lower-bnd-cor}
Suppose that $\{\varphi_n\}_{n=1}^N\subset \mathbb{S}^{d-1}$ are i.i.d. uniformly distributed random vectors on $\mathbb{S}^{d-1}$.
Then
$${\rm lim \thinspace inf}_{N \to \infty} N^2 \thinspace \mathbb{E}|W_N|^2 \geq \lim_{N\to \infty} N^2 \thinspace \EE |R_N|^2 = 2 \delta^2\left(\frac{d-1}{2\,C_d}\right)^2
\geq \pi \delta^2 (d-1).
$$
\end{corollary}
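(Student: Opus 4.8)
The plan is to read the Corollary off directly from Theorem~\ref{RadialThmDimd} together with the estimate~\eqref{C-d-estimate} for $C_d$; no new geometric input is needed, only a limit computation and one elementary inequality.

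\emph{The first inequality.} I would first observe that, by \eqref{WN-def} and the identity $W_N=\sup\{R_N(\psi):\psi\in\mathbb{S}^{d-1}\}$ recorded just below it, one has $W_N\ge R_N(\psi)$ pointwise for each fixed $\psi$, and hence $\mathbb{E}|W_N|^2\ge\mathbb{E}|R_N(\psi)|^2$. Since $\{\varphi_n\}_{n=1}^N$ are i.i.d. uniformly distributed on $\mathbb{S}^{d-1}$, rotation invariance makes the right-hand side independent of $\psi$ and equal to $\mathbb{E}|R_N|^2$. Multiplying by $N^2$ and taking $\liminf_{N\to\infty}$ on the left then yields $\liminf_{N\to\infty}N^2\,\mathbb{E}|W_N|^2\ge\lim_{N\to\infty}N^2\,\mathbb{E}|R_N|^2$, once the limit on the right is shown to exist.

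\emph{The limit.} Here I would simply insert the exact identity \eqref{RadThmDimdBnd} and multiply through by $N^2$. The first summand of \eqref{RadThmDimdBnd}, times $N^2$, converges to the closed-form expression displayed in the statement because $N^2/((N+1)(N+2))\to1$. For the remainder I would use the two-sided bound on $\alpha_{d,N}$ provided by Theorem~\ref{RadialThmDimd}: by \eqref{C-d-estimate} both $1-\tfrac{C_d}{d-1}$ and $1-\tfrac{2C_d}{d-1}$ lie strictly in $(0,1)$, so each side of the sandwich is at most a constant times $r^N$ with $r\in(0,1)$, and since $N^2 r^N\to0$ we conclude $N^2\alpha_{d,N}\to0$. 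Thus $N^2\,\mathbb{E}|R_N|^2$ has the claimed limit. This verification that the geometrically small remainder survives multiplication by $N^2$ only to vanish is the one place where any argument at all is required, so I would flag it as the main --- essentially the sole --- step.

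\emph{The last inequality.} Finally I would substitute the upper bound $C_d\le\sqrt{(d-1)/(2\pi)}$ from \eqref{C-d-estimate} into the closed form $2\delta^2\left(\dfrac{d-1}{2C_d}\right)^2$, which immediately gives the stated lower bound of order $\delta^2(d-1)$; chaining the three parts completes the proof. I do not anticipate any genuine obstacle, since the whole thing is a limit computation plus one elementary estimate; the only points to keep track of are the rotation-invariance reduction to a single fixed direction $\psi$ in the first step and the routine decay $N^2 r^N\to0$ controlling the remainder in the second.
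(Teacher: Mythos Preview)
Your proposal is correct and is precisely the argument the paper has in mind: the corollary is stated in the paper without any proof, as an immediate consequence of Theorem~\ref{RadialThmDimd} together with the inequality $W_N\ge R_N(\psi)$ and the bound~\eqref{C-d-estimate}. Your write-up simply makes explicit the three routine ingredients (the pointwise inequality plus rotation invariance, the limit with the $N^2 r^N\to0$ remainder check, and the substitution of the upper bound on $C_d$), which is exactly the intended derivation.
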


\section{Consistent reconstruction and coverage processes} \label{cover-sec}

In this section we describe a useful connection between the worst case error $W_N$ and coverage processes on the sphere $\mathbb{S}^{d-1}$.  
This relationship will play a central
role in the proofs of our upper bounds on $\mathbb{E}|W_N|^2$ that appear in the subsequent sections.

\subsection{Worst case error and coverage processes}
The expected worst case error squared $\mathbb{E}|W_N|^2$ can be represented as
\begin{equation} \label{MSE-intbyparts}
\mathbb{E}|W_N|^2 = 2 \int_0^{\infty} \lambda \ \Pr [ W_N > \lambda ] \thinspace d \lambda.
\end{equation}
So, a main step towards bounding $\mathbb{E}|W_N|^2$ is to bound the probability $\Pr [ W_N > \lambda ]$. 
The next lemma shows that bounding $\Pr [ W_N > \lambda ]$ can be reformulated as a coverage problem.

\begin{lemma} \label{cover-lem}
For each $\lambda>0$ define the set
\begin{equation} \label{Bn-def}
B_n=B_n(\lambda) = \left\{ u \in \mathbb{S}^{d-1} : \langle u , \varphi_n \rangle > \frac{\epsilon_n+\delta}{\lambda} \  
\hbox{ or } \  \langle u , \varphi_n \rangle < \frac{\epsilon_n - \delta}{\lambda}  \right\}.
\end{equation}
Then
\begin{equation} \label{coverage-eq2}
\forall \thinspace \lambda>0, \ \ \ \Pr [ W_N \geq \lambda ] = \Pr \left( \mathbb{S}^{d-1} \not\subset \bigcup_{n=1}^N B_n(\lambda) \right).
\end{equation}
In particular,
\begin{equation} \label{coverage-eq}
\forall \thinspace \lambda>0, \ \ \ \Pr [ W_N > \lambda ] \leq \Pr \left( \mathbb{S}^{d-1} \not\subset \bigcup_{n=1}^N B_n(\lambda) \right).
\end{equation}
\end{lemma}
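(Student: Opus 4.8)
The plan is to identify the event $\{W_N \geq \lambda\}$ with a non-covering event for the caps (or complementary regions) $B_n(\lambda)$ on the sphere, working directly from the defining equation $P_N = \bigcap_{n=1}^N E_n$ with $E_n = \{u : |\langle u, \varphi_n\rangle - \epsilon_n| \leq \delta\}$. The key observation is a scaling/radial duality: a point $u = r\psi$ with $\psi \in \mathbb{S}^{d-1}$ and $r > 0$ fails to lie in $E_n$ precisely when $|\langle r\psi, \varphi_n\rangle - \epsilon_n| > \delta$, i.e. when $\langle \psi, \varphi_n\rangle > (\epsilon_n + \delta)/r$ or $\langle \psi, \varphi_n\rangle < (\epsilon_n - \delta)/r$ — which is exactly the condition $\psi \in B_n(r)$. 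So for a fixed direction $\psi$, the radial size $R_N(\psi) = \max\{r : r\psi \in P_N\}$ satisfies $R_N(\psi) \geq \lambda$ if and only if $\lambda\psi \in P_N$, i.e. $\lambda\psi \in E_n$ for every $n$, i.e. $\psi \notin B_n(\lambda)$ for every $n$, i.e. $\psi \notin \bigcup_{n=1}^N B_n(\lambda)$.

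The main steps, in order, are: (1) record the radial non-membership characterization above, noting that $E_n$ is a slab containing the origin in its interior (since $|\epsilon_n| \leq \delta$ forces $0 \in E_n$), so $\{r : r\psi \in E_n\}$ is an interval of the form $[0, \rho_n(\psi)]$ or possibly all of $[0,\infty)$ when $\langle \psi, \varphi_n\rangle$ is too small to ever exit the slab; hence $\{r\psi \in P_N\}$ is an interval $[0, R_N(\psi)]$ and the max defining $R_N(\psi)$ is attained (or is $+\infty$ on a null event, which is harmless). (2) Deduce the pointwise equivalence $R_N(\psi) \geq \lambda \iff \psi \notin \bigcup_n B_n(\lambda)$. (3) Take the supremum over $\psi$: since $W_N = \sup_{\psi \in \mathbb{S}^{d-1}} R_N(\psi)$, we have $W_N \geq \lambda$ iff there exists $\psi$ with $R_N(\psi) \geq \lambda$ — here one must be slightly careful because a supremum equalling $\lambda$ need not be attained, but since each $R_N(\psi)$-interval is closed and $P_N$ is compact, the value $W_N$ is attained at some point of $P_N$, so $W_N \geq \lambda$ genuinely means some $\psi$ achieves $R_N(\psi) \geq \lambda$. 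This gives $\{W_N \geq \lambda\} = \{\mathbb{S}^{d-1} \not\subset \bigcup_n B_n(\lambda)\}$ as events, hence the probability identity \eqref{coverage-eq2}. (4) The inequality \eqref{coverage-eq} is then immediate from $\{W_N > \lambda\} \subseteq \{W_N \geq \lambda\}$.

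The main obstacle — really the only subtle point — is the handling of attainment and the strict-versus-weak inequality bookkeeping: one needs that $W_N$ is a genuine maximum over the compact set $P_N$ (guaranteed by the spanning assumption on $\{\varphi_n\}$, which makes $P_N$ compact a.s.) and that each radial slice $\{r \geq 0 : r\psi \in P_N\}$ is a closed interval containing $0$, so that the event equality in \eqref{coverage-eq2} is exact rather than merely an inclusion. Once these topological points are pinned down, the equivalence is a direct unwinding of the definitions of $E_n$ and $B_n(\lambda)$, with no estimation required. I would present step (1) as a short remark, steps (2)–(3) as the core computation, and note that \eqref{coverage-eq} follows trivially; the degenerate event where some $R_N(\psi) = +\infty$ (i.e. $\varphi_n \perp \psi$ for all $n$, impossible when the $\varphi_n$ span) can be dismissed in one line.
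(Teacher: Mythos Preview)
Your proposal is correct and follows essentially the same approach as the paper: the paper introduces the intermediate sets $A_n(\lambda) = (\lambda \mathbb{S}^{d-1}) \setminus E_n$, observes that $W_N \geq \lambda$ iff $\lambda \mathbb{S}^{d-1}$ is not covered by the $A_n(\lambda)$, and then rescales via $u \mapsto u/\lambda$ to obtain the $B_n(\lambda)$---which is exactly your radial-slice argument in slightly different packaging. Your treatment is in fact more explicit than the paper's about the attainment and convexity issues (the paper records the key equivalence \eqref{cover-lambda} with a bare ``Observe that\ldots''), so nothing is missing.
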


\begin{proof}
Let $E_n$ be as in \eqref{En-def} and define the set
\begin{align*}
A_n =A_n(\lambda)= ({\lambda \mathbb{S}^{d-1}) \backslash E_n} 
= \{ u \in \lambda \mathbb{S}^{d-1} : \langle u , \varphi_n \rangle >  \epsilon_n + \delta \  \hbox{ or } \  \langle u , \varphi_n \rangle < \epsilon_n -  \delta \}.
\end{align*}
Observe that
\begin{equation} \label{cover-lambda}
W_N \geq \lambda \ \ \ \hbox{if and only if} \ \ \ \lambda \mathbb{S}^{d-1} \not\subset \bigcup_{n=1}^N A_n(\lambda).
\end{equation}
It only remains to rescale \eqref{cover-lambda} to the unit-sphere.
Using the map $u \mapsto u /\lambda$ and the set $B_n(\lambda)$ defined by \eqref{Bn-def} one has that
\begin{equation} \label{coverage-equiv}
\lambda \mathbb{S}^{d-1} \not\subset \bigcup_{n=1}^N A_n(\lambda) \ \ \ \hbox{if and only if} \ \ \ \mathbb{S}^{d-1} \not\subset \bigcup_{n=1}^N B_n(\lambda).
\end{equation}
The proof of  \eqref{coverage-eq2} now follows by combining \eqref{cover-lambda} and \eqref{coverage-equiv}.
\end{proof}

We shall refer to the set $B_n(\lambda)$ as a bi-cap since it can be expressed as the union of two antipodal (possibly empty) open spherical caps
\begin{equation} \label{bicap-eq}
B_n(\lambda) =  {\rm Cap}(\varphi_n, \theta_n^{+})\cup  {\rm Cap}(-\varphi_n, \theta_n^{-}),
\end{equation}
where the angular radii $\theta_n^{+}$ and $\theta_n^{-}$ are given by 
\begin{equation} \label{thetaplus}
\theta^{+}_n =  
\begin{cases}
\arccos \left( \frac{\delta+\epsilon_n}{\lambda} \right), & \hbox{ if } \delta +  \epsilon_n < \lambda,\\
0, & \hbox{ otherwise,} 
\end{cases}
\end{equation}
and
\begin{equation} \label{thetaminus}
\theta^{-}_n = 
\begin{cases}
\arccos \left( \frac{\delta- \epsilon_n}{\lambda} \right), & \hbox{ if } \delta - \epsilon_n < \lambda,\\
0, & \hbox{ otherwise.} 
\end{cases}
\end{equation}
In particular, depending on the size of the parameters $\epsilon_n$ and $\lambda$, each set $B_n(\lambda)$ is either:
(i) a union of two disjoint spherical caps with antipodal centers, or (ii) a single spherical cap, or (iii) the empty set.

\subsection{Background on coverage processes} 
In our general analysis of consistent reconstruction, 
the coverage problem in Lemma \ref{cover-lem} involves spherical caps with both random angular radii and random centers.  
Random coverage problems have a long and technical history, e.g., see \cite{S, BCL10}, but in high dimensions the literature
is still limited when considering caps with both random center and random size.
In fact, even in the case of constant sized caps with random centers on $\mathbb{S}^{d-1}$, bounds on coverage probabilities were
only recently obtained in \cite{BCL10}.  
Some noteworthy results for randomly sized caps include \cite{SH} which contains exact results in dimension $d=2$ with general distributions on the random arclengths,
and \cite{J86} which contains asymptotic results (as the random cap size becomes small) on general manifolds.

In this section, we shall
provide some necessary background on coverage processes in the case of spherical caps with angular radii of fixed size and random centers $\{\varphi_n\}_{n=1}^N$
that are uniformly distributed on the unit-sphere.  The techniques and results that we will use later are especially influenced by \cite{BCL10} and \cite{FN77}.

For the remainder of this section let $\{\varphi_n\}_{n=1}^N \subset \mathbb{S}^{d-1}\subset \R^{d}$ be i.i.d. random vectors that are uniformly distributed on $\mathbb{S}^{d-1}$, 
and let $0<\theta< \pi/2$ be fixed.
Consider the following non-coverage probability
\begin{equation}\label{p-def}
p(N,d-1,\theta)  = \Pr \left( \mathbb{S}^{d-1} \not\subset \bigcup_{n=1}^N {\rm Cap}(\varphi_n,\theta) \right).
\end{equation}
The following theorem contains the best known bounds on $p(N,d-1,\theta)$ when $0<\theta<\pi/2$,
see Theorem 1.1 and Proposition 5.5 in \cite{BCL10}.   The work in \cite{BCL10} is stated for coverage by closed caps but the following result remains
true for open caps.

\begin{theorem}[B\"urgisser, Cucker, Lotz] \label{thmBCL}
If $0 < \theta < \pi/2$ and $N \geq d \geq 2$ then
\begin{align}
p(N,d-1,\theta) \leq 2^{1-N} \sum_{k=0}^{d-1} \binom{N-1}{k}
+ \binom{N}{d} \ \left( \frac{d\sqrt{d-1}}{2^{d-1}}  \right)  \ F_{N,d-1}(\theta), \label{BCL-bnd}
\end{align}
where
\begin{align*}
F_{N,d-1}(\theta)= \int_0^{\cos \theta} (1-t^2)^{((d-1)^2-2)/2}(1 - r_{d-1}(\arccos t))^{N-d-2} dt
\end{align*}
and $r_{d-1}( \arccos t)$ is defined using \eqref{rel-cap-meas}.
\end{theorem}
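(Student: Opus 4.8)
This is Theorem 1.1 together with Proposition 5.5 of \cite{BCL10}; the plan is to reconstruct the main ideas. Passing from closed caps (as in \cite{BCL10}) to open caps is routine: each $\partial\,\mathrm{Cap}(\varphi_n,\theta)$ has $\sigma_{d-1}$-measure zero and the centers $\{\varphi_n\}_{n=1}^N$ are almost surely in general position, so neither side of \eqref{BCL-bnd} changes; alternatively the open-cap probability is the increasing limit of closed-cap probabilities for slightly smaller radii, and the right side of \eqref{BCL-bnd} is continuous in $\theta$.

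The first step is to rewrite non-coverage as a statement about an arrangement of small spheres. The event $\mathbb{S}^{d-1}\not\subset\bigcup_{n=1}^N\mathrm{Cap}(\varphi_n,\theta)$ holds exactly when the closed uncovered set $\mathcal{U}=\bigcap_{n=1}^N\{u\in\mathbb{S}^{d-1}:\langle u,\varphi_n\rangle\le\cos\theta\}$ is nonempty, i.e. when the $N$ antipodal caps of angular radius $\pi-\theta$ have a common point. Almost surely the boundary spheres $\partial\,\mathrm{Cap}(\varphi_n,\theta)$ are in general position, so $\mathcal{U}$ is a spherically polytopal region whose combinatorics are governed by how many of these $(d-2)$-spheres meet at each point. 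The combinatorial heart --- where a sweep or Euler-characteristic argument on $\mathbb{S}^{d-1}$ enters --- is the dichotomy: either $\mathcal{U}$ already contains a full cell of the coarser arrangement cut out by the \emph{signs} $\langle u,\varphi_n\rangle\gtrless 0$ alone, so that the open hemispheres $\{u:\langle u,\varphi_n\rangle>0\}$ themselves fail to cover $\mathbb{S}^{d-1}$; or $\mathcal{U}$ is ``small'' and must contain a vertex of the boundary arrangement, a point lying on $d-1$ of the spheres $\partial\,\mathrm{Cap}(\varphi_n,\theta)$ and strictly outside all remaining caps.

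In the first alternative the hemispheres fail to cover precisely when all $\varphi_n$ lie in a common closed halfspace through the origin, which by Wendel's theorem has probability $2^{1-N}\sum_{k=0}^{d-1}\binom{N-1}{k}$; this is the first term. In the second alternative I would bound the probability by the expected number of uncovered arrangement vertices. Summing over the (at most $\binom{N}{d}$) relevant index sets, each contributing at most two candidate intersection points, one conditions on the positions of the corresponding centers and computes the probability that a fixed candidate vertex is uncovered by the remaining caps; integrating this against the joint density of the centers and reducing the configuration to a single scalar parameter $t$ produces the Jacobian factor $(1-t^2)^{((d-1)^2-2)/2}$, the normalizing constant $\tfrac{d\sqrt{d-1}}{2^{d-1}}$, and the non-coverage factor $(1-r_{d-1}(\arccos t))^{N-d-2}$, i.e. the integral $F_{N,d-1}(\theta)$. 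Adding the two contributions yields \eqref{BCL-bnd}.

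The step I expect to be the main obstacle is the combinatorial-geometric dichotomy together with its exact bookkeeping: showing rigorously that every non-coverage configuration is witnessed either by a Wendel-type sign pattern or by an uncovered arrangement vertex, and pinning down the count and density of such vertices (the $\binom{N}{d}$ prefactor, the constant $\tfrac{d\sqrt{d-1}}{2^{d-1}}$, and the exponent $N-d-2$). This is exactly where the integral geometry on the sphere and on the Grassmannian developed in \cite{BCL10}, building on \cite{FN77}, does the real work; by comparison, the Wendel count and the one-dimensional integral $F_{N,d-1}$ are routine.
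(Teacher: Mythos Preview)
The paper does not prove this theorem; it is quoted from \cite{BCL10} (Theorem 1.1 and Proposition 5.5 there), and the only argument the paper supplies is the passage from closed caps to open caps. That passage is exactly the monotonicity-plus-continuity argument you give as your ``alternatively'': for $0<\alpha<\theta$ the open-cap non-coverage probability is bounded by the closed-cap probability at radius $\alpha$, and $F_{N,d-1}(\alpha)\to F_{N,d-1}(\theta)$ as $\alpha\to\theta$. So on the one point the paper actually argues, your proposal matches it.

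Your sketch of the underlying \cite{BCL10} proof (Wendel term plus expected number of uncovered arrangement vertices) is additional content the paper does not attempt. As a plan it is broadly in the right spirit, but note a small inconsistency in your bookkeeping: you speak of vertices lying on $d-1$ of the boundary $(d-2)$-spheres, which would naturally produce $\binom{N}{d-1}$ index sets, yet the bound carries $\binom{N}{d}$. In \cite{BCL10} the second term arises from a somewhat different counting (tied to their Proposition 5.5 and the associated integral-geometric reduction), so the exact origin of the $\binom{N}{d}$ prefactor, the constant $d\sqrt{d-1}/2^{d-1}$, and the exponent $N-d-2$ requires following their argument rather than a generic vertex count. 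You correctly flag this as the main obstacle; since the present paper simply cites the result, there is nothing further to compare.
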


We briefly comment on why Theorem \ref{thmBCL} holds when $p(N,d-1,\theta)$ is defined as in \eqref{p-def} using open spherical caps instead 
of closed spherical caps as in \cite{BCL10}.  For this it suffices to note that if $0<\alpha <\theta<\pi/2$ then
$$\Pr \left( \mathbb{S}^{d-1} \not\subset \bigcup_{n=1}^N {\rm Cap}(\varphi_n,\theta) \right)
\leq \Pr \left( \mathbb{S}^{d-1} \not\subset \bigcup_{n=1}^N \overline{{\rm Cap}(\varphi_n,\alpha)} \right),$$
and that
$$\lim_{\alpha \to \theta} F_{N,d-1}(\alpha) = F_{N,d-1}(\theta).$$

We shall later need bounds on $p(N,d-1, \theta)$
when $\arccos(1/\sqrt{d}) \leq \theta < \pi/2$, i.e., when the cap height is less
than $1/\sqrt{d}$.

\begin{lemma} \label{pbound-lem}
If $N \geq \frac{2d}{\ln(12/11)}\approx (22.99)d$  and $\arccos(1/\sqrt{d}) \leq \theta < \pi/2$ then
\begin{align*}
p(N,(d-1), \theta) 
& \leq 2 \sqrt{d} \ \left( 13 \right)^d \left( \frac{11}{12} \right)^{N/2}.
\end{align*}
In particular, if 
$s> \frac{2 \ln(13)}{\ln(12/11)}\approx 58.96$ then
\begin{equation}
\lim_{d \to \infty} p(sd,(d-1), \theta)=0. \label{pds-lim}
\end{equation}
\end{lemma}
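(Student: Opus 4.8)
The plan is to feed the coverage probability into the Bürgisser–Cucker–Lotz estimate (Theorem~\ref{thmBCL}), which applies since $N \ge \tfrac{2d}{\ln(12/11)} \ge d \ge 2$, and then to control each of its two summands separately. Write $p(N,d-1,\theta) \le T_1 + T_2$ with $T_1 = 2^{1-N}\sum_{k=0}^{d-1}\binom{N-1}{k}$ and $T_2 = \binom{N}{d}\,\tfrac{d\sqrt{d-1}}{2^{d-1}}\,F_{N,d-1}(\theta)$. The first summand is harmless: from the standard partial binomial bound $\sum_{k=0}^{d-1}\binom{N-1}{k} \le \bigl(\tfrac{e(N-1)}{d-1}\bigr)^{d-1}$ one gets $T_1 \le 2^{1-N}\bigl(\tfrac{eN}{d-1}\bigr)^{d-1}$, and since $2^{-N}(12/11)^{N/2} = (3/11)^{N/2}$ decays geometrically while $(eN/(d-1))^{d-1}$ grows only polynomially in $N$, the quotient $T_1/(11/12)^{N/2}$ is a polynomial in $N$ times a decaying geometric, hence decreasing for $N$ in our range; a single check at $N = \tfrac{2d}{\ln(12/11)}$ shows $T_1 \le \sqrt d\,(13)^d(11/12)^{N/2}$, with enormous room to spare. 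So all the real work is in $T_2$, and in particular in the integral $F_{N,d-1}(\theta)$.

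For $T_2$ the first step is to use the hypothesis $\theta \ge \arccos(1/\sqrt d)$, i.e.\ $\cos\theta \le 1/\sqrt d$, which places the integration range of $F_{N,d-1}(\theta)$ inside $[0,1/\sqrt d]$. Since $r_{d-1}$ is increasing and $\arccos$ is decreasing, $r_{d-1}(\arccos t) \ge r^{*} := r_{d-1}(\arccos(1/\sqrt d))$ for all $t$ in that range, so (using $N-d-2 \ge 0$, which holds because $\tfrac{2d}{\ln(12/11)} > 2d \ge d+2$)
\[
F_{N,d-1}(\theta) \;\le\; (1-r^{*})^{N-d-2}\int_0^{1/\sqrt d}(1-t^2)^{((d-1)^2-2)/2}\,dt .
\]
I would then prove the clean lower bound $r^{*} \ge 1/12$. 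For $d \ge 3$ the substitution $z = \cos u$ in \eqref{rel-cap-meas} gives $r_{d-1}(\arccos h) = \tfrac12 - C_d\int_0^h(1-z^2)^{(d-3)/2}\,dz$, since the density in \eqref{Zpdf} integrates to one and hence $\int_0^1(1-z^2)^{(d-3)/2}\,dz = 1/(2C_d)$; then $\int_0^{1/\sqrt d}(1-z^2)^{(d-3)/2}\,dz \le 1/\sqrt d$ combined with $C_d \le \sqrt{(d-1)/(2\pi)}$ from \eqref{C-d-estimate} yields $r^{*} \ge \tfrac12 - \tfrac1{\sqrt{2\pi}} > \tfrac1{12}$; the case $d = 2$ is immediate, since $r^{*} = r_1(\pi/4) = 1/4$. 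Hence $(1-r^{*})^{N-d-2} \le (11/12)^{N-d-2}$, and for the remaining integral I would use the trivial bound $\int_0^{1/\sqrt d}(1-t^2)^{((d-1)^2-2)/2}\,dt \le 1/\sqrt d$ when $d \ge 3$ (the integrand is at most $1$) and the explicit value $\int_0^{\cos\theta}(1-t^2)^{-1/2}\,dt = \arcsin(\cos\theta) \le \pi/4$ when $d = 2$; in either case $F_{N,d-1}(\theta) \le \tfrac{1}{\sqrt d}(11/12)^{N-d-2}$ up to the harmless $d=2$ adjustment.

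It then remains to bound
\[
T_2 \;\le\; \binom{N}{d}\,\frac{\sqrt{d(d-1)}}{2^{d-1}}\,(11/12)^{N-d-2},
\]
which I would handle by: (i) the Stirling estimate $\binom{N}{d} \le \tfrac1{\sqrt{2\pi d}}\bigl(\tfrac{eN}{d}\bigr)^d$; (ii) splitting $(11/12)^{N-d-2} = (11/12)^{N/2}(11/12)^{N/2-d-2}$ with $(11/12)^{N/2-d-2}\le(12/11)^{d+2}$; and (iii) the elementary calculus fact that $\bigl(\tfrac{eN}{d}\bigr)^d(11/12)^{N/2}$ is maximized over $N>0$ exactly at $N^{*} = \tfrac{2d}{\ln(12/11)}$, with maximal value $\bigl(\tfrac2{\ln(12/11)}\bigr)^d$ — this is precisely why the threshold $\tfrac{2d}{\ln(12/11)}$ enters the hypothesis. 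Assembling these pieces gives $T_2 \le c\,\sqrt{d-1}\,b^d(11/12)^{N/2}$ with $b = \tfrac{12/11}{\ln(12/11)} \approx 12.54 < 13$ and $c = \tfrac{2(12/11)^2}{\sqrt{2\pi}} \approx 0.95 < 1$, so in particular $T_2 \le \sqrt d\,(13)^d(11/12)^{N/2}$ (for $d=2$ one checks the finitely many constants directly). Adding the two bounds gives $p(N,d-1,\theta) \le 2\sqrt d\,(13)^d(11/12)^{N/2}$. For the limiting statement \eqref{pds-lim}, substitute $N = sd$: when $s > \tfrac{2\ln 13}{\ln(12/11)}$ (which exceeds $\tfrac2{\ln(12/11)}$, so the first part applies) one gets $p(sd,d-1,\theta) \le 2\sqrt d\,\bigl(13\,(11/12)^{s/2}\bigr)^d$, and $13\,(11/12)^{s/2} < 1$ exactly when $s > \tfrac{2\ln 13}{\ln(12/11)}$, so the bound tends to $0$ as $d \to \infty$.

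The hard part is the chain of estimates for $T_2$: one needs the cap-measure lower bound $r^{*} \ge 1/12$ and the $1/\sqrt d$ decay of $\int_0^{1/\sqrt d}(1-t^2)^{((d-1)^2-2)/2}\,dt$ to be simultaneously good enough that, after the Stirling bound and the optimization at $N^{*}$, the exponential base stays strictly below $13$ and the polynomial prefactor below $\sqrt d$. These inequalities are rather tight for small $d$, which is also what forces the separate (but trivial) treatment of $d=2$, where the exponent $((d-1)^2-2)/2$ is negative.
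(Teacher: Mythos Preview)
Your approach is essentially the same as the paper's: apply the B\"urgisser--Cucker--Lotz bound, control the hemisphere term $T_1$ and the integral term $T_2$ separately, use $r_{d-1}(\arccos(1/\sqrt d))\ge 1/12$ to bound $F_{N,d-1}$, and then combine Stirling with the optimization $(eN/d)^d(11/12)^{N/2}\le(2/\ln(12/11))^d$. The only real differences are cosmetic: the paper handles $T_1$ via Hoeffding's inequality (getting $T_1\le e^{1/8}e^{-N/8}$ directly, which is cleaner than your monotonicity argument), cites \cite{BGKKLS} for the $1/12$ cap-measure bound rather than proving it from \eqref{rel-cap-meas} and \eqref{C-d-estimate} as you do, and writes the $(eN/d)^d$ estimate as the inequality $\ln x\le x/c+\ln c-1$ with $c=2e/\ln(12/11)$, which is exactly your optimization in disguise.
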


\begin{proof}
Hoeffding's inequality shows that if $(N-1) \geq 4(d-1)$ then
\begin{align}
2^{1-N} \sum_{k=0}^{d-1} \binom{N-1}{k} & \leq 
e^{1/8} e^{-N/8}.  \label{HemiBnd}
\end{align}
By Lemma 2.1 of  \cite{BGKKLS} we have that
$$\arccos(1/\sqrt{d}) \leq \theta < \pi/2 \implies \frac{1}{12} \leq r_{d-1}(\theta) \leq \frac{1}{2}.$$
Thus, for $\arccos(1/\sqrt{d}) \leq \theta < \pi/2$ we have
\begin{align}
F_{N,d-1}(\theta) &\leq \int_0^{1/\sqrt{d}} (1-t^2)^{((d-1)^2-2)/2}(1 - r_{d-1}(t))^{N-(d-1)-1} dt \notag  \\
&\leq \frac{1}{\sqrt{d}} \left( \frac{11}{12} \right)^{N-d}.  \label{F-bnd}
\end{align}
Combining \eqref{BCL-bnd}, \eqref{HemiBnd}, \eqref{F-bnd} 
gives
\begin{align}
p(N,d-1, \theta) &\leq e^{1/8}e^{-N/8} + \binom{N}{d} 
\frac{d\sqrt{d-1}}{2^{d-1}} \frac{1}{\sqrt{d}}  \left( \frac{11}{12} \right)^{N-d} \notag \\
& \leq e^{1/8}e^{-N/8} +  2d \left( \frac{12}{22} \right)^d \left( \frac{11}{12} \right)^{N}  \binom{N}{d}. \label{p-eq-unsimp}
\end{align}

We shall use the following bounds to further simplify \eqref{p-eq-unsimp}.  First, note that by Stirling's approximation
\begin{equation} \label{binom-stirling}
\binom{N}{d} \leq \frac{N^d}{d!} \leq \frac{1}{\sqrt{2\pi d}} \left( \frac{eN}{d} \right)^d.
\end{equation}
Also, it follows from $\ln(x) \leq (x/c) + (\ln(c) -1)$ that one has
\begin{equation} \label{dd-bnd}
\forall c>0, \ \ \ \left( \frac{eN}{d} \right)^d \leq c^d e^{eN/c} e^{-d}.
\end{equation}
Applying \eqref{binom-stirling} and \eqref{dd-bnd} with $c = 2e/ \ln(12/11)$ to \eqref{p-eq-unsimp} yields that
if $N \geq \frac{2d}{\ln(12/11)}$ then 
\begin{align*}
p(N,d-1, \theta) & \leq e^{1/8}e^{-N/8} +  2d \left( \frac{12}{22} \right)^d \left( \frac{11}{12} \right)^{N}  \binom{N}{d}\\
& \leq e^{1/8}e^{-N/8} + \frac{2\sqrt{d}}{\sqrt{2\pi}} \left( \frac{12}{22} \right)^d \left( \frac{2}{\ln(12/11)} \right)^d \left(\frac{11}{12} \right)^{N/2} \\
& \leq e^{1/8}e^{-N/8} + \sqrt{d} \thinspace \left( 13 \right)^d \left( \frac{11}{12} \right)^{N/2} \\
& \leq 2 \sqrt{d} \thinspace \left( 13 \right)^d \left( \frac{11}{12} \right)^{N/2}.
\end{align*}
This completes the proof.

\end{proof}

\section{Upper bounds for general distributions} \label{genthm-sec}

In this section we prove that consistent reconstruction achieves MSE of the optimal order $\mathbb{E}|W_N|^2 \lesssim 1/N^2$ under rather general conditions on the i.i.d. random
 measurement vectors $\{\varphi_n\}_{n=1}^N \subset \R^d$.  

 Our error bounds use the following admissibility condition. 
We assume that $\{\varphi_n\}_{n=1}^N\subset \R^d$ are independent identically distributed versions of a unit-norm random vector $\varphi\in \mathbb{S}^{d-1}$,
and we further assume that there exist constants $\alpha \geq 1, s>0$ such that
\begin{equation} \label{general-cond}
\forall \thinspace 0\leq t \leq 1, \ \forall x \in \mathbb{S}^{d-1}, \ \ \ \Pr \left( |\langle x, \varphi \rangle | \leq t \right) \leq \alpha \thinspace t^s.
\end{equation}
Roughly speaking, the admissibility condition \eqref{general-cond} ensures that the random vector $\varphi$ cannot be too concentrated on any subspace
of $\R^d$ with positive codimension.

\begin{example} \label{unif-gencond-ex}
If $d\geq 3$ and if the random vector $\varphi$ is uniformly distributed on the unit-sphere $\mathbb{S}^{d-1}$ then $\varphi$ satisfies \eqref{general-cond} with $s=1$ and $\alpha = 2C_d$, where $C_d$ is as in \eqref{CdDef}.  This follows since $0 \leq f_Z(z)\leq 2C_d$ in \eqref{Zpdf}.   Similarly, when $d=2$, a direct computation shows that if $\varphi$ is uniformly distributed on the unit-circle $\mathbb{S}^{1}$ then $\varphi$ satisfies \eqref{general-cond} with $s=1$ and $\alpha = 1$.  
This follows using $2C_2 = 2/\pi$ and $\arcsin(t) \leq (\pi/2)t$.
\end{example}

\begin{example} \label{beta-dominate-ex}
Suppose that the unit-norm random vector $\varphi_1\in \mathbb{S}^{d-1}$ satisfies \eqref{general-cond} with $\alpha=\alpha_1$ and $s=s_1$.
Suppose that the unit-norm random vector $\varphi_2 \in \mathbb{S}^{d-1}$ has the property that there exists $\beta>0$ such that
for every Borel subset $B\subset \mathbb{S}^{d-1}$ there holds
$\Pr [ \varphi_2\in B] \leq \beta \thinspace \Pr [ \varphi_1\in B].$
Then $\varphi_2$ satisfies \eqref{general-cond} with $\alpha=\beta \alpha_1$ and $s=s_1$.
\end{example}

\begin{example}
If the random vector $\varphi$ is uniformly distributed on an open subset of $\mathbb{S}^{d-1}$
then $\varphi$ satisfies \eqref{general-cond}.  This follows by combining Examples \ref{unif-gencond-ex} and \ref{beta-dominate-ex}.
\end{example}

\begin{example}
If the random vector $\varphi$ has a point mass at $v\in \R^d$ with $\Pr [ \varphi = v]>0$ then $\varphi$ does not
satisfy \eqref{general-cond}. To see this, let $x_v\in \mathbb{S}^{d-1}$ be any vector that is orthogonal to $v$.
Then $\Pr [ \langle \varphi, x_v \rangle  =0] \geq \Pr [ \varphi =v] >0$ shows that \eqref{general-cond} does not hold.
\end{example}

We are now ready to state and prove our first main theorem.
\begin{theorem} \label{main-thm1}
Suppose that $\{\varphi_n\}_{n=1}^N \subset \mathbb{S}^{d-1}$ are i.i.d. versions of a unit-norm random vector $\varphi$ that satisfies \eqref{general-cond}.
For all $N \geq (d+2)/s$ the expected worst case error squared satisfies
\begin{equation} \label{thm1bnd}
\mathbb{E} |W_N|^2 \leq \frac{10^5 \delta^2d^2(2 \alpha)^{2/s}\ln^2(16 (2\alpha)^{1/s})}{(N+1)(N+2)}
+  \delta^2 32^{d+1}(2\alpha)^{(d+1)/s}\left(\frac{1}{2} \right)^N.
\end{equation}
\end{theorem}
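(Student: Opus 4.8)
The plan is to use the coverage representation from Lemma \ref{cover-lem} together with the formula \eqref{MSE-intbyparts}, splitting the integral $\mathbb{E}|W_N|^2 = 2\int_0^\infty \lambda \Pr[W_N>\lambda]\,d\lambda$ at a threshold $\lambda_0$ of order $\delta d/N$. For $\lambda > \lambda_0$ I would bound $\Pr[W_N>\lambda]$ by the non-coverage probability $\Pr(\mathbb{S}^{d-1}\not\subset\bigcup_n B_n(\lambda))$ and then estimate this using a (conditional on the $\epsilon_n$) combination of a union bound over a geodesic $\epsilon$-net of the sphere together with a covering/Lipschitz argument: if a net point is covered by a bi-cap with a bit of room to spare, then nearby sphere points are covered too. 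The admissibility condition \eqref{general-cond} is exactly what controls the probability that a single random bi-cap $B_n(\lambda)$ fails to cover a fixed point $u$: that event is $\{|\langle u,\varphi_n\rangle| \le (\delta-|\epsilon_n|)/\lambda\}$-type, so \eqref{general-cond} gives it probability at most $\alpha((\delta+|\epsilon_n|)/\lambda)^s$ or so, hence at most roughly $\alpha(2\delta/\lambda)^s$ uniformly; the $N$ vectors being independent then yields $(1-c\lambda^{-s})^N$-type decay for each net point, and the net has cardinality $(8/\epsilon)^{d-1}$ from \eqref{eps-net-card-eq}. Choosing $\epsilon$ comparable to $\lambda/\delta$ (so the Lipschitz slack is affordable) and $\lambda$ large enough, the union bound over the net beats the exponential, giving a bound like $(8\delta/\lambda)^{d-1} e^{-cN\alpha(\delta/\lambda)^s}$ or, after optimizing, something that is exponentially small in $N$ once $\lambda \gtrsim \delta d (2\alpha)^{1/s}/N$ up to the logarithmic factor $\ln(16(2\alpha)^{1/s})$ that appears in \eqref{thm1bnd}.

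For the contribution $0 \le \lambda \le \lambda_0$, I would simply use the trivial bound $\Pr[W_N>\lambda] \le 1$, so that part contributes at most $\lambda_0^2$, which should be taken to be of the stated order $\asymp \delta^2 d^2(2\alpha)^{2/s}\ln^2(\cdots)/N^2$ — this is the "main term" in \eqref{thm1bnd}. (One could do better here using Proposition \ref{GenRbndProp}-type lower-dimensional bounds, but for the upper bound the crude estimate suffices and already gives the optimal $1/N^2$ rate.) The second, purely exponential term $\delta^2 32^{d+1}(2\alpha)^{(d+1)/s}2^{-N}$ in \eqref{thm1bnd} comes from the tail $\lambda > \lambda_1$ for some larger $\lambda_1$ (of order $\delta$), where the net cardinality is essentially $(8/\epsilon)^{d-1}$ with $\epsilon$ a fixed fraction, contributing a geometric-series tail $\int_{\lambda_1}^\infty \lambda \cdot (\text{net size}) \cdot (1/2)^N \, d\lambda$; the constants $32^{d+1}$, $(2\alpha)^{(d+1)/s}$ and the base $1/2$ are what one gets after being slightly wasteful in the net size and in the per-cap probability, and after integrating $\lambda$ against the $(1/2)^N$ decay.

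Concretely the steps are: (1) fix $\lambda$ and condition on the noise; express the event that a net point $u_j$ is \emph{not} covered with slack as an intersection of $N$ independent small-probability events, bound each using \eqref{general-cond}; (2) convert "covered with geodesic slack $\epsilon$" at all net points into "$\mathbb{S}^{d-1}$ fully covered" via the Lipschitz continuity of $v\mapsto \langle v,\varphi_n\rangle$ together with the angular-radius formulas \eqref{thetaplus}--\eqref{thetaminus}, paying a factor that forces $\epsilon \asymp \lambda/\delta$; (3) union-bound over the net using \eqref{eps-net-card-eq}; (4) average over the noise $\epsilon_n$ (trivial, since the per-cap bound was made uniform in $\epsilon_n$); (5) insert into \eqref{MSE-intbyparts}, split the $\lambda$-integral at the two thresholds, use $\Pr \le 1$ below the first, the net bound in the middle range, and the geometric-tail bound above; (6) optimize the free parameters and collect constants to land on \eqref{thm1bnd}. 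The main obstacle I anticipate is step (2) and the associated bookkeeping: one must be careful that the slack needed in the net argument (a fixed fraction of the cap's angular radius) is genuinely achievable with $\epsilon \asymp \lambda/\delta$ given that $\theta_n^\pm = \arccos((\delta\pm\epsilon_n)/\lambda)$ degenerates to $0$ exactly when $\lambda$ approaches $\delta\pm\epsilon_n$, so the covering argument is only clean in the regime $\lambda$ not too close to $\delta$ — and then separately handling (with a cruder but still exponentially small estimate) the transitional range of $\lambda$ near $\delta$, which is where the somewhat generous constants $10^5$ and $32^{d+1}$ originate.
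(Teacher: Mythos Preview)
Your architecture for large $\lambda$ (coverage lemma plus $\epsilon$-net discretization, with $\epsilon\asymp\delta/\lambda$ --- you wrote $\lambda/\delta$, presumably a slip) is essentially the paper's Step~II and correctly produces the exponential tail term. The gap is in the small-$\lambda$ regime. Your step~(1) conditions on the noise and bounds $\Pr[u\notin B_n(\lambda)\mid\epsilon_n]$ via \eqref{general-cond}, and step~(4) asserts this bound is uniform in $\epsilon_n$. But for $\lambda\le\delta-|\epsilon_n|$ the bi-cap $B_n(\lambda)$ is \emph{empty} and the conditional non-cover probability equals $1$; and the bound $\alpha(2\delta/\lambda)^s$ coming from \eqref{general-cond} is in any case vacuous (it exceeds $1$) whenever $\lambda<2\delta\alpha^{1/s}$. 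So your net bound gives nothing on the interval $[\lambda_0,\,C\delta]$, which is precisely where the main $1/N^2$ contribution lives, and the trivial bound below $\lambda_0\asymp\delta d/N$ does not reach up to $C\delta$.

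The paper's treatment of $0\le\lambda\le 4\delta(2\alpha)^{1/s}$ is quite different. Rather than bounding per-cap probabilities, it conditions on the random integer $I(\lambda)=\#\{n:\epsilon_n+\delta\le\lambda/(2(2\alpha)^{1/s})\}$; on the event $\{I(\lambda)=k\}$ the sphere fails to be covered only if $k$ caps of \emph{fixed} angular radius $\arccos(1/(2(2\alpha)^{1/s}))$ (at symmetrized centers $b_n\varphi_n$) fail to cover, an event of probability $q(k,d-1,\alpha,s)$ independent of $\lambda$. Since $I(\lambda)$ is binomial with parameter $\lambda/(4\delta(2\alpha)^{1/s})$, integrating $\lambda$ against this binomial weight over $[0,4\delta(2\alpha)^{1/s}]$ is exactly a beta integral and produces the factor $(k+1)/((N+1)(N+2))$. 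A separate fixed-$\epsilon$ net argument then shows $q(k,\ldots)\le(16(2\alpha)^{1/s})^d(3/4)^k$, and summing $\sum_k(k+1)q(k,\ldots)$ is what generates the $d^2\ln^2(16(2\alpha)^{1/s})$ constant. This conditioning-on-count plus beta-function mechanism is the missing idea in your proposal; your trivial-bound-below-$\lambda_0$ shortcut could in principle be rescued by replacing the conditional per-cap estimate with the averaged one $\Pr[u\notin B_1(\lambda)]\le 1-c\lambda/\delta$ (which \eqref{general-cond} does imply via a lower bound on $\mathbb{E}|\langle u,\varphi\rangle|$) together with a $\lambda$-independent net, but that is not the argument you wrote down.
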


\begin{proof}  The proof is divided into several steps.

\noindent {\em Step I.}  To use the error represention \eqref{MSE-intbyparts} we need to bound $\Pr [ W_N > \lambda]$.  By Lemma \ref{cover-lem}
this will be done by bounding the coverage probability \eqref{coverage-eq}.  
We begin by discretizing the coverage problem \eqref{coverage-eq} with an $\epsilon$-net argument developed in \cite{FN77}.

Given any $\epsilon>0$, let $\mathcal{N}_{\epsilon} = \{z_m\}_{m=1}^M \subset \mathbb{S}^{d-1}$ be a geodesic $\epsilon$-net for $\mathbb{S}^{d-1}$ with cardinality satisfying
$$M=\#(\mathcal{N}_{\epsilon}) \leq \left( \frac{8}{\epsilon} \right)^{d-1}.$$
Recall that the bi-cap $B_n(\lambda)$ is defined by  \eqref{Bn-def} and \eqref{bicap-eq} as
$$B_n(\lambda) =  {\rm Cap}(\varphi_n, \theta_n^{+})\cup  {\rm Cap}(-\varphi_n, \theta_n^{-}).$$
Next define the shrunken bi-cap 
\begin{equation} 
T_{\epsilon}(B_n(\lambda)) = {\rm Cap}(\varphi_n, T_{\epsilon}(\theta_n^{+}))\cup  {\rm Cap}(-\varphi_n, T_{\epsilon}(\theta_n^{-})),
\end{equation}
where $T_{\epsilon}(\theta_n^{+})$ and $T_{\epsilon}(\theta_n^{-})$ are defined by  \eqref{thetaplus}, \eqref{thetaminus} and \eqref{soft-thresh}.

The key discretization step is to proceed as in \cite{FN77} and note that
\begin{equation} \label{cover-discrete}
\mathbb{S}^{d-1} \not\subset \bigcup_{n=1}^N B_n(\lambda) \ \ \ \implies \mathcal{N}_{\epsilon} \not\subset \bigcup_{n=1}^N T_{\epsilon}(B_n(\lambda)).
\end{equation}
Since the shrunken bi-caps $\{T_{\epsilon}(B_n(\lambda))\}_{n=1}^N$ are independent and identically distributed, \eqref{cover-discrete} implies
\begin{align}
\Pr ( W_N > \lambda ) &\leq \Pr \left( \mathbb{S}^{d-1} \not\subset \bigcup_{n=1}^N B_n(\lambda) \right) \notag \\
&\leq \Pr \left(  \mathcal{N}_{\epsilon} \not\subset \bigcup_{n=1}^N T_{\epsilon}(B_n(\lambda)) \right) \notag\\
& \leq \sum_{m=1}^{M}  \Pr \left( z_m \notin \bigcup_{n=1}^N T_{\epsilon}(B_n(\lambda)) \right) \notag \\
& = \sum_{m=1}^{M} \large\left(  \Pr \left[ z_m \notin  T_{\epsilon}(B_1(\lambda)) \right] \large\right)^N \notag\\
& \leq \left( \frac{8}{\epsilon} \right)^{d-1} \sup_{z\in \mathbb{S}^{d-1}} \left( \Pr [ z \notin T_{\epsilon}(B_1(\lambda)) ] \right)^N.\label{cover-discrete-bnd}\\ \notag
\end{align}

\noindent {\em Step II.}  We now use \eqref{cover-discrete-bnd} to bound $\Pr[W_N>\lambda]$ in the case when $\lambda \geq 4 \delta$.
In this case note that each $B_n(\lambda)$ is a genuine bi-cap that consists of two antipodal non-empty spherical caps.
Since $\lambda >2 \delta$ it is straightforward from \eqref{Bn-def} that
$$B_n(\lambda) \supset  \left\{ u \in \mathbb{S}^{d-1} : \langle u , \varphi_n \rangle > \frac{2\delta}{\lambda} \  
\hbox{ or } \  \langle u , \varphi_n \rangle < \frac{-2\delta}{\lambda}  \right\},$$
and it follows that the shrunken bi-cap $T_{\epsilon}(B_n(\lambda))$ satisfies
\begin{align} 
T_{\epsilon}(B_n(\lambda)) &\supset 
\left\{ u \in \mathbb{S}^{d-1} : |\langle u , \varphi_n \rangle|  >  \cos \left( \arccos \left(\frac{2\delta}{\lambda}\right) -\epsilon \right) \right\} \notag \\
& \supset  \left\{ u \in \mathbb{S}^{d-1} : |\langle u , \varphi_n \rangle| > \frac{2\delta}{\lambda} +\epsilon  \right\}. \label{bicap-subset}
\end{align}

For the remainder of this step we fix $\epsilon =  \frac{2\delta}{\lambda}$.  The assumption \eqref{general-cond}  along with \eqref{bicap-subset}
shows that if $\lambda \geq 4\delta$ then
\begin{align}
\Pr [ z \notin T_{\epsilon}(B_1(\lambda))] &\leq \Pr \left( |\langle z, \varphi \rangle| \leq\frac{2\delta}{\lambda} + \epsilon \right) \label{eq-useinthm2}\\
& = \Pr \left( |\langle z, \varphi \rangle| \leq\frac{4\delta}{\lambda} \right) \notag\\
& \leq \alpha \left( \frac{ 4\delta }{\lambda} \right)^s. \label{step2bnd}
\end{align}
Thus \eqref{cover-discrete-bnd} and \eqref{step2bnd} imply that if $\lambda \geq 4\delta$ then
\begin{equation}  \label{step2result}
 \Pr [ W_N > \lambda ] \leq \left( \frac{4\lambda}{\delta} \right)^{d-1}  \left( \alpha \left( \frac{ 4\delta }{\lambda} \right)^s \right)^N
 = 4^{d-1} (4^s \alpha)^N \left( \frac{\delta}{\lambda}\right)^{sN-d+1}.\\
\end{equation}
Since $8\delta(2\alpha)^{1/s} \geq 4\delta$ note that if $N \geq (d+2)/s$ then \eqref{step2result} implies
\begin{align}
\int_{8\delta(2\alpha)^{1/s}}^{\infty} \lambda \Pr [ W_N > \lambda] d \lambda  
& \leq 4^{d-1} (4^s \alpha)^N \int_{8\delta(2\alpha)^{1/s}}^{\infty} \lambda  \left( \frac{\delta}{\lambda}\right)^{sN-d+1} d \lambda \notag\\
& = 4^{d-1} \left(\frac{1}{2(2^s)} \right)^N \delta^2 \frac{(8(2\alpha)^{1/s})^{d+1}}{(sN-d-1)} \notag \\
& \leq \delta^2 (32(2\alpha)^{1/s})^{d+1} \left(\frac{1}{2} \right)^N. \label{int2}
\end{align}

\noindent {\em Step III.}  Next, we bound $\Pr[W_N>\lambda]$ in the case when $0< \lambda \leq {4(2\alpha)^{1/s}\delta}$.

It will be useful to begin with the following symmetrization argument.  
Let $\{b_n\}_{n=1}^N$ be i.i.d. Bernoulli random variables satisfying $\Pr[b_n=1] = \Pr[b_n=-1]=1/2$, and additionally 
suppose that $\{b_n\}_{n=1}^N$ is independent of $\{\varphi_n\}_{n=1}^N$ and $\{\epsilon_n\}_{n=1}^N$.  
Note that the i.i.d. bi-caps $\{B_n(\lambda)\}_{n=1}^N$
have the same distribution as the i.i.d. bi-caps $\{B_n^{\prime}(\lambda)\}_{n=1}^N$ defined by
$$B_n^{\prime}(\lambda) = {\rm Cap}(b_n \varphi_n, \theta^+_n(\lambda))\cup {\rm Cap}(-b_n \varphi_n, \theta^-_n(\lambda)).$$
This follows from \eqref{bicap-eq}, \eqref{thetaplus}, \eqref{thetaminus} and the fact that $\epsilon_n$ is uniformly distributed on $[-\delta,\delta]$.

Consequently,
$$\Pr[\mathbb{S}^{d-1}  \not\subset \bigcup_{n=1}^N B_n(\lambda)] =  \Pr[ \mathbb{S}^{d-1}  \not\subset \bigcup_{n=1}^N B^{\prime}_n(\lambda)].$$
Let $I(\lambda)$ denote the number of $\{\epsilon_n + \delta\}_{n=1}^N$ that lie in the interval $\left[0,\frac{\lambda}{2(2\alpha)^{1/s}} \right] $, namely:
$$I(\lambda) = \# \left( \{\epsilon_n + \delta\}_{n=1}^N \cap \left[0,\frac{\lambda}{2(2\alpha)^{1/s}} \right] \right).$$
Let
\begin{equation} \label{qdef}
q(k, d-1, \alpha,s)=\Pr \left[ \mathbb{S}^{d-1}  \not\subset \bigcup_{n=1}^k {\rm Cap} \left(b_n \varphi_n, \arccos\left(\frac{1}{2(2\alpha)^{1/s}}\right) \right)\right].
\end{equation}
Thus
\begin{align}
\Pr [ W_N>\lambda] &\leq \Pr[ \mathbb{S}^{d-1}  \not\subset \bigcup_{n=1}^N B_n(\lambda)] \notag \\
& = \Pr[ \mathbb{S}^{d-1}  \not\subset \bigcup_{n=1}^N B^{\prime}_n(\lambda)] \notag \\
& \leq \Pr[ \mathbb{S}^{d-1}  \not\subset  \bigcup_{n=1}^N{\rm Cap}(b_n \varphi_n, \theta^+_n(\lambda))] \notag \\
& = \sum_{k=0}^N \Pr \left[ \mathbb{S}^{d-1}  \not\subset \bigcup_{n=1}^N {\rm Cap}(b_n \varphi_n, \theta^+_n(\lambda) ) \ \bigg\vert \ I(\lambda) = k\right] \ \Pr [ I(\lambda)=k] \notag \\
& \leq \sum_{k=0}^N \Pr \left[ \mathbb{S}^{d-1}  \not\subset \bigcup_{n=1}^k {\rm Cap}(b_n \varphi_n, \arccos(1/2(2\alpha)^{1/s}) )\right] \ \Pr [ I(\lambda)=k] \notag \\
& = \sum_{k=0}^N q(k, d-1, \alpha,s) \ \Pr[ I(\lambda) = k] \notag \\
& = \sum_{k=0}^N q(k, d-1, \alpha,s) \binom{N}{k} \left( 1 - \frac{\lambda}{4\delta(2\alpha)^{1/s}}\right)^{N-k} \left( \frac{\lambda}{4\delta(2\alpha)^{1/s}} \right)^k. \label{thm1step3eq}
\end{align}

\noindent {\em Step IV.}  In this step we bound the integral $\int_0^{4\delta (2\alpha)^{1/s}} \lambda \Pr[W_N > \lambda] d\lambda$.  
Equation \eqref{thm1step3eq} and properties of the beta function imply
\begin{align}
&\int_0^{4\delta(2\alpha)^{1/s}}  \lambda \Pr[W_N > \lambda] d\lambda \notag \\
& \leq \sum_{k=0}^N q(k, d-1, \alpha,s) \binom{N}{k}  
\int_0^{4\delta (2\alpha)^{1/s}} \lambda \left( 1 - \frac{\lambda}{4\delta(2\alpha)^{1/s}} \right)^{N-k} \left( \frac{\lambda}{4\delta(2\alpha)^{1/s}} \right)^k d \lambda \notag\\
&= \sum_{k=0}^N q(k, d-1, \alpha,s) \binom{N}{k} {16\delta^2(2\alpha)^{2/s}} \int_0^1 u^{k+1} (1- u)^{N-k} d \lambda \notag \\
&=\sum_{k=0}^N q(k, d-1, \alpha,s) \binom{N}{k}  {16\delta^2(2\alpha)^{2/s}}  \frac{(k+1)! (N-k)!}{(N+2)!} \notag \\
&=\frac{16\delta^2(2\alpha)^{2/s}}{(N+1)(N+2)} \sum_{k=0}^N q(k, d-1, \alpha,s) \thinspace (k+1) \notag \\
&\leq \frac{16\delta^2(2\alpha)^{2/s}}{(N+1)(N+2)} \left( 1 + 2 \sum_{k=1}^N k \thinspace q(k, d-1,\alpha,s) \right). \label{thm1stepivqsum}
\end{align}

\noindent {\em Step V.}  In this step we bound the quantity $q(k, d-1, \alpha,s)$ appearing in \eqref{thm1stepivqsum}
and defined in \eqref{qdef}.
We shall again employ an $\epsilon$-net argument as in Step I.  
For the remainder of this step fix 
\begin{equation} \label{thm1stepvepsdef}
\epsilon = \frac{1}{2(2 \alpha)^{1/s}} \left( 1 - \frac{1}{4(2\alpha)^{2/s}}\right)^{-1/2},
\end{equation}
and let $\mathcal{N}_{\epsilon}$ be a geodesic $\epsilon$-net for $\mathbb{S}^{d-1}$ with cardinality $\#(\mathcal{N}_{\epsilon}) \leq \left( \frac{8}{\epsilon} \right)^{d-1}$.
A similar argument as used to obtain \eqref{cover-discrete-bnd} yields the following
\begin{align}
q(k, d-1,\alpha,s)  & = \Pr \left[ \mathbb{S}^{d-1} \not\subset \bigcup_{n=1}^k {\rm Cap}(b_n \varphi_n, \arccos(1/2(2\alpha)^{1/s}) ) \right]  \notag \\
& \leq \Pr [ \mathcal{N}_{\epsilon} \not\subset \bigcup_{n=1}^k {\rm Cap}(b_n \varphi_n, \arccos(1/2(2\alpha)^{1/s}) - \epsilon )] \notag \\
&\leq \left( \frac{8}{\epsilon} \right)^{d-1} 
\left( \sup_{z\in \mathbb{S}^{d-1}} \Pr \left[ z \notin {\rm Cap}(b_1 \varphi_1, \arccos(1/2(2\alpha)^{1/s}) - \epsilon ) \right]  \right)^k. \label{thm1stepvqepsnetbnd}
\end{align}
By \eqref{general-cond} and \eqref{thm1stepvepsdef}, for an arbitrary $z\in \mathbb{S}^{d-1}$ we may compute as follows
\begin{align}
\Pr &\left[ z \notin {\rm Cap}(b_1 \varphi_1, \arccos(1/2(2\alpha)^{1/s})- \epsilon ) \right] \notag \\
&= \frac{1}{2} \Pr \left[ \langle z, \varphi_1 \rangle \leq \cos(\arccos(1/ 2(2\alpha)^{1/s}) - \epsilon) \right]
+  \frac{1}{2} \Pr \left[ -\langle z, \varphi_1 \rangle \leq \cos(\arccos(1/ 2(2\alpha)^{1/s}) - \epsilon) \right] \notag \\
&= \frac{1}{2} \left( 1 + \Pr\left[ |\langle z, \varphi_1 \rangle| \leq \cos(\arccos(1/ 2(2\alpha)^{1/s}) - \epsilon) \right] \right) \notag\\
&=  \frac{1}{2} \left( 1 + \Pr\left[ |\langle z, \varphi_1 \rangle| \leq \frac{\cos \epsilon}{2(2\alpha)^{1/s}}  
+ \left(1 - \frac{1}{4(2\alpha)^{2/s}}\right)^{1/2} \sin \epsilon  \right] \right) \notag \\
&\leq \frac{1}{2} \left( 1 + \Pr\left[ |\langle z, \varphi_1 \rangle| \leq \frac{1}{2(2\alpha)^{1/s}}  + \left(1 - \frac{1}{4(2\alpha)^{2/s}}\right)^{1/2}  \epsilon  \right] \right) \notag \\
&= \frac{1}{2} \left( 1 + \Pr\left[ |\langle z, \varphi_1 \rangle| \leq \frac{1}{(2\alpha)^{1/s}}   \right] \right) \notag \\
& \leq  \frac{1}{2} \left( 1 + \alpha \left|\frac{1}{(2\alpha)^{1/s}}\right|^s \right) \notag \\
&= \left(\frac{3}{4}\right). 
\label{thm1stepv34bnd}
\end{align}
Combining \eqref{thm1stepvepsdef}, \eqref{thm1stepvqepsnetbnd}, and \eqref{thm1stepv34bnd} gives
\begin{align}
q(k,d-1,\alpha,s)
&\leq \left( \frac{8}{\epsilon} \right)^{d-1}  \left( \frac{3}{4} \right)^k \notag \\
& =  \left( 8 \sqrt{4(2\alpha)^{2/s}-1} \right)^{d-1} \left( \frac{3}{4} \right)^k \notag\\
& \leq \left( 16 (2\alpha)^{1/s} \right)^{d} \left( \frac{3}{4} \right)^k. \label{StepVeq}
\end{align}

\vspace{.1in}
\noindent {\em Step VI.}  In this step we bound the sum 
$\sum_{k=1}^N k \thinspace q(k, d-1, \alpha,s)$ appearing in \eqref{thm1stepivqsum}.
For the remainder of this step let 
\begin{equation} \label{stepviKdef}
K = \frac{2d \ln(16 (2\alpha)^{1/s})}{\ln(4/3)},
\end{equation}
and note that $K>1$.
If $k \geq K$ then by \eqref{StepVeq}
$$q(k,d-1,\alpha,s) \leq \left(\frac{3}{4}\right)^{k/2}.$$
Thus
\begin{align}
\sum_{k=K+1}^N k \thinspace q(k, d-1,\alpha,s)  
&\leq \sum_{k=K+1}^N k \left(\frac{3}{4}\right)^{k/2} \notag \\
&\leq \sum_{k=1}^{\infty} (k+K) \left(\sqrt{\frac{3}{4}}\right)^{k+K}\notag \\
&\leq 2K \sum_{k=1}^{\infty} k \left(\sqrt{\frac{3}{4}}\right)^{k}\notag \\
&\leq 120K. \label{Kupsum}
\end{align}
Here we have used that
\begin{equation} \label{geosum}
\forall \thinspace 0< r< 1, \ \ \ \sum_{k=1}^{\infty} k r^k = \frac{1}{(1-r)^2}.
\end{equation}
Since $0 \leq q(k, d-1,\alpha,s) \leq 1$ we also have
\begin{align}
\sum_{k=1}^{K} k \thinspace q(k, d-1, \alpha,s) & \leq
\sum_{k=1}^{K} k \leq K^2. \label{Klowsum}
\end{align}
Since $K>1$, combining \eqref{Kupsum} and \eqref{Klowsum} gives
\begin{equation} \label{stepviqsumeq}
1+2\sum_{k=1}^N k q(k, d-1,\alpha,s) \leq 1+2(120K+K^2) \leq 250K^2.
\end{equation}
By \eqref{thm1stepivqsum}, \eqref{stepviKdef}, and \eqref{stepviqsumeq} we conclude that
\begin{align}
\int_0^{4\delta(2\alpha)^{1/s}} \lambda \Pr[W_N> \lambda] d \lambda 
&\leq \frac{16 \delta^2(2 \alpha)^{2/s} (250K^2)}{(N+1)(N+2)} \notag \\
&\leq\frac{(16)(250) \delta^2(2 \alpha)^{2/s}}{(N+1)(N+2)} \left(\frac{2d \ln(16 (2\alpha)^{1/s})}{\ln(4/3)}\right)^2 \notag \\
& \leq\frac{10^5 \delta^2d^2(2 \alpha)^{2/s}\ln^2(16 (2\alpha)^{1/s})}{(N+1)(N+2)}. \label{thm1stepvieq}
\end{align}

\noindent {\em Step VII.}   It remains to bound the integral $\int_{4\delta(2\alpha)^{1/s}}^{8\delta(2\alpha)^{1/s}} \lambda \Pr[W_N>\lambda] d \lambda$.
Note that
$$4\delta(2\alpha)^{1/s} \leq \lambda \implies \Pr[W_N>\lambda] \leq \Pr[W_N>4\delta(2\alpha)^{1/s}].$$
Since $4\delta(2\alpha)^{1/s}\geq 4\delta$ it follows from \eqref{step2result} that
\begin{align*}
 \Pr[W_N>4\delta(2\alpha)^{1/s}] & \leq 4^{d-1}(4(2\alpha)^{1/s})^{d-1} \left( \frac{1}{2}\right)^N.
\end{align*}
Consequently
\begin{align}
\int_{4\delta(2\alpha)^{1/s}}^{8\delta(2\alpha)^{1/s}} \lambda \Pr[W_N>\lambda] d \lambda 
&\leq 4^{d-1}(4(2\alpha)^{1/s})^{d-1} \left( \frac{1}{2}\right)^N 24 \delta^2 (2\alpha)^{2/s} \notag \\
& \leq 2\delta^2 16^d (2\alpha)^{(d+1)/s} \left( \frac{1}{2}\right)^N. \label{thm1stepviieq}
\end{align}
The bound \eqref{thm1bnd} now follows from \eqref{int2}, \eqref{thm1stepvieq}, and \eqref{thm1stepviieq}.  This completes the proof.
\end{proof}

\begin{corollary} 
Suppose that $\{\varphi_n\}_{n=1}^N \subset \mathbb{S}^{d-1}$ are i.i.d. versions of a unit-norm random vector $\varphi$ that satisfies \eqref{general-cond}.
There exist absolute constants $c_1, c_2>0$ such that if 
$$N \geq c_2  d  \ln (32(2\alpha)^{1/s})$$ then the expected worst case error squared satisfies
\begin{equation} \label{corthm1bnd}
\mathbb{E}|W_N|^2 \leq \frac{c_1 \delta^2 d^2(2\alpha)^{2/s} \ln^2(16(2\alpha)^{1/s}) }{(N+1)(N+2)}.
\end{equation}
\end{corollary}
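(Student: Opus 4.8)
The plan is to deduce the corollary directly from Theorem \ref{main-thm1} by absorbing the exponentially decaying term of \eqref{thm1bnd} into the main $1/N^2$ term, at the cost of the stated lower bound on $N$. Write $A = 32(2\alpha)^{1/s}$ for brevity. Since $\alpha \geq 1$ and $s>0$ we have $2\alpha \geq 2$, hence $(2\alpha)^{1/s}>1$ and $A>32$, so $\ln A > \ln 32 > 0$; likewise $16(2\alpha)^{1/s} > 16$. Theorem \ref{main-thm1} gives, for every $N \geq (d+2)/s$,
\[
\mathbb{E}|W_N|^2 \leq \frac{10^5 \delta^2 d^2 (2\alpha)^{2/s}\ln^2(16(2\alpha)^{1/s})}{(N+1)(N+2)} + \delta^2 A^{d+1}\Bigl(\tfrac12\Bigr)^{N},
\]
so it suffices to dominate the second summand by $\frac{c\,\delta^2 d^2(2\alpha)^{2/s}\ln^2(16(2\alpha)^{1/s})}{(N+1)(N+2)}$ for an absolute constant $c$, under the hypothesis $N \geq c_2 d\ln A$.

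First I would fix $c_2$ large enough (say $c_2 = 6/\ln 2$) that $N \geq c_2 d\ln A$ forces $\tfrac{N}{2}\ln 2 \geq 3 d\ln A \geq (d+1)\ln A$, the last inequality using $d \geq 1$. Then
\[
A^{d+1}\Bigl(\tfrac12\Bigr)^{N} = \exp\bigl((d+1)\ln A - N\ln 2\bigr) \leq \exp\bigl(-\tfrac{N}{2}\ln 2\bigr) = \Bigl(\tfrac12\Bigr)^{N/2},
\]
and consequently
\[
A^{d+1}\Bigl(\tfrac12\Bigr)^{N}(N+1)(N+2) \leq \Bigl(\tfrac12\Bigr)^{N/2}(N+1)(N+2) \leq M_0,
\]
where $M_0 := \sup_{x\geq 0}(\tfrac12)^{x/2}(x+1)(x+2)$ is a finite absolute constant (exponential decay beats polynomial growth). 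Dividing by $(N+1)(N+2)$ and using $d^2(2\alpha)^{2/s}\ln^2(16(2\alpha)^{1/s}) \geq \ln^2 16 \geq 1$, I obtain
\[
\delta^2 A^{d+1}\Bigl(\tfrac12\Bigr)^{N} \leq \frac{M_0\, \delta^2 d^2(2\alpha)^{2/s}\ln^2(16(2\alpha)^{1/s})}{(N+1)(N+2)},
\]
and adding this to the main term yields \eqref{corthm1bnd} with $c_1 = 10^5 + M_0$.

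It remains only to check that the hypothesis $N \geq c_2 d\ln(32(2\alpha)^{1/s})$ of the corollary already implies the hypothesis $N \geq (d+2)/s$ needed to invoke Theorem \ref{main-thm1}. For this I would use $\ln(32(2\alpha)^{1/s}) \geq \max\{\ln 32,\ \tfrac1s\ln(2\alpha)\}$ together with $\ln(2\alpha) \geq \ln 2$: when $s$ is small the bound $c_2 d\cdot \tfrac{\ln 2}{s} \geq \tfrac{d+2}{s}$ follows from $c_2\ln 2 \geq 3 \geq \tfrac{d+2}{d}$, and when $s$ is large the quantity $\tfrac{d+2}{s}$ is dominated by $c_2 d\ln 32$; enlarging $c_2$ if necessary covers both cases simultaneously. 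There is no genuinely hard step here — it is an absorption argument — and the only mild care needed is in this last verification, keeping the choice of $c_2$ consistent between making the exponential term beat the polynomial $(N+1)(N+2)$ and making it large enough to force $N \geq (d+2)/s$.
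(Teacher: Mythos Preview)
Your proposal is correct and is exactly the intended derivation: the paper states the corollary without proof, as an immediate consequence of Theorem \ref{main-thm1}, and your absorption of the exponentially decaying term $\delta^2 A^{d+1}(1/2)^N$ into the main $1/((N+1)(N+2))$ term once $N \gtrsim d\ln A$ is the natural way to make this explicit. One minor remark: your case split on $s$ in the final verification is unnecessary, since the inequality $c_2 d\cdot\frac{\ln 2}{s} \geq \frac{d+2}{s}$ is equivalent to $c_2\ln 2 \geq (d+2)/d$ and thus holds for all $s>0$ simultaneously with your choice $c_2 = 6/\ln 2$.
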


\begin{example}[Uniformly distributed measurements]
If $d\geq3$ and $\{\varphi_n\}_{n=1}^N\subset \mathbb{S}^{d-1}$ are i.i.d. uniformly distributed
random vectors on $\mathbb{S}^{d-1}$ then Example \ref{unif-gencond-ex} and \eqref{C-d-estimate} show that one can take 
$s=1$ and $\alpha = 2C_d \leq 2\sqrt{\frac{d-1}{2\pi}}\leq \sqrt{d}$.
In this case the error bound \eqref{corthm1bnd} shows that there exist constants $a,c$ such that if  $N \geq a d \ln d$ then
\begin{equation}
\mathbb{E}|W_N|^2 \leq \frac{c \thinspace \delta^2 d^3 \ln^2 d}{(N+1)(N+2)}. \label{d3logd}
\end{equation}
In the next section (see Theorem \ref{mainthm2}) we show that the logarithmic term in \eqref{d3logd} can in fact be removed.
\end{example}

For perspective, note that the admissibilty condition \eqref{general-cond} is stronger than the condition \eqref{rg-exp-cond} that was used to analyze the Rangan-Goyal algorithm.
In particular, it is straightforward to show that if $\varphi \in \mathbb{S}^{d-1}$ is a random vector that satisfies \eqref{general-cond}, then
$$\forall x \in \mathbb{S}^{d-1}, \ \mathbb{E} |\langle x, \varphi \rangle| \geq \left(\frac{1}{\alpha}\right)^s \left( 1- \frac{1}{s+1} \right)>0.$$

We conclude this section by noting that Theorem \ref{main-thm1} does not generally hold under the weaker condition \eqref{rg-exp-cond}.
For this we first show if $\varphi$ has a point mass then the conclusion \eqref{thm1bnd} does not hold.
\begin{example}\label{atom-counter-examp}
Suppose $\{ \varphi_n\}_{n=1}^N \subset \R^d$ are i.i.d. versions of a random vector $\varphi$ that has  a point mass at $v \in \R^d$ that occurs
with positive probability $c=\Pr [ \varphi = v]>0$.   Then
$$\forall \lambda >0, \ \ \ \Pr [W_N > \lambda ] \geq \Pr [W_N = \infty] \geq \Pr [ \varphi_1 = \varphi_2 = \cdots = \varphi_N = v] = c^N.$$
Thus
$$\mathbb{E}|W_N|^2 = 2 \int_0^{\infty} \lambda \Pr [W_N > \lambda ] d \lambda \geq 2 \int_0^{\infty} \lambda c^N  d \lambda = \infty.$$
In particular, the conclusion \eqref{thm1bnd} of Theorem \ref{main-thm1} fails if $\varphi$ has a point mass.
\end{example}

\begin{example}
Let $\{e_n\}_{n=1}^d \subset \R^d$ be an orthonormal basis.  Let $\varphi$ be the discrete random vector defined by
$\Pr [ \varphi = e_n] = 1/d$ for each $1\leq n \leq d$.  Then
$$\forall x \in \R^d, \ \ \ \mathbb{E} | \langle x, \varphi \rangle | = (1/d)\sum_{n=1}^d | \langle x, e_n \rangle | \geq (1/d) \left( \sum_{n=1}^d | \langle x, e_n \rangle |^2 \right)^{1/2}
= (1/d)\|x\|.$$
So, $\varphi$ satisfies \eqref{rg-exp-cond}, but by Example \ref{atom-counter-examp}, the conclusion \eqref{thm1bnd} of
Theorem \ref{main-thm1} does not hold.
\end{example}

\section{Upper bounds for uniformly distributed measurements} \label{unif-sec}

In this section we prove refined bounds for consistent reconstruction when $\{\varphi_n\}_{n=1}^N\subset \mathbb{S}^{d-1}$ are i.i.d.
random vectors that are uniformly distributed on the unit-sphere.  In this case, our next main result shows that
the dimension dependent constant $K$ in \eqref{overview-eq} is essentially dominated by the cube of the ambient dimension.

\begin{theorem} \label{mainthm2}
If the random vectors $\{\varphi_n\}_{n=1}^N \subset \R^d$ are i.i.d. uniformly distributed on $\mathbb{S}^{d-1}$ and if $N \geq d+2$ then
\begin{equation}
\mathbb{E}|W_N|^2 \leq \frac{2e^{12}\delta^2 d^3}{(N+1)(N+2)} + 26\delta^2 d^{3/2}\left( \frac{11}{12} \right)^{N/2} e^{\frac{d \ln(1024d)}{2}}.
\end{equation}
In particular, 
$$\limsup_{d\to \infty} \mathbb{E}|W_{\lceil d^{3/2} \rceil}|^2 \leq 2e^{12}\delta^2 \ \ \ \hbox{ and } \ \ \ 
 \lim_{d\to \infty} \mathbb{E}|W_{ \lceil d^{3/2} \ln d \thinspace \rceil}|^2 = 0.$$
\end{theorem}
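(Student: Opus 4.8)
The plan is to re-run the proof of Theorem \ref{main-thm1} in the special case $s=1$, $\alpha=2C_d$ supplied by Example \ref{unif-gencond-ex}, but to replace the crude $\epsilon$-net estimate for the conditional non-coverage probability $q(k,d-1,\alpha,s)$ of \eqref{qdef} by the sharp coverage bound of Lemma \ref{pbound-lem}; this substitution is precisely what removes the spurious $\ln^2 d$ factor in \eqref{d3logd}. Concretely, I would start from $\mathbb{E}|W_N|^2=2\int_0^\infty\lambda\,\Pr[W_N>\lambda]\,d\lambda$ as in \eqref{MSE-intbyparts}, invoke Lemma \ref{cover-lem}, and split the integral at the thresholds $16\delta C_d$ and $32\delta C_d$, which are the quantities $4\delta(2\alpha)^{1/s}$ and $8\delta(2\alpha)^{1/s}$ from the proof of Theorem \ref{main-thm1} evaluated at $s=1$, $2\alpha=4C_d$.

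On $[16\delta C_d,32\delta C_d]$ and on $[32\delta C_d,\infty)$ I would argue exactly as in Steps II and VII of that proof: since $C_d\ge 1/4$ for $d\ge 2$ (from the lower bound in \eqref{C-d-estimate}) one has $16\delta C_d\ge 4\delta$, so the estimate \eqref{step2result} applies with $s=1$, $\alpha=2C_d$, and a direct computation of $\int_{16\delta C_d}^{32\delta C_d}\lambda\,\Pr[W_N>\lambda]\,d\lambda$ and $\int_{32\delta C_d}^\infty\lambda\,\Pr[W_N>\lambda]\,d\lambda$ (the latter requiring $N\ge d+2$ for convergence) yields dimension–exponential factors of the shape $(64C_d)^{d-1}2^{-N}$ and $(128C_d)^d 4^{-N}/(N-d-1)$. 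Using $C_d\le\sqrt{(d-1)/(2\pi)}$ from \eqref{C-d-estimate} together with $N\ge d+2$, one checks that each of these two contributions is at most $26\delta^2 d^{3/2}(11/12)^{N/2}e^{d\ln(1024d)/2}$, the second term of the claimed bound: the $2^{-N}$ and $4^{-N}$ decay comfortably beats $(11/12)^{N/2}$, and this more than compensates for the mild mismatch between $(128C_d)^d$ and $e^{d\ln(1024d)/2}=(32\sqrt d)^d$.

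The heart of the argument is the range $[0,16\delta C_d]$. Following Steps III and IV of the proof of Theorem \ref{main-thm1} verbatim with $s=1$, $\alpha=2C_d$ — the bi-cap symmetrization, conditioning on the number of indices $n$ with $\delta+\epsilon_n\le\lambda/(8C_d)$, monotonicity in the angular radii, and the Beta-function identity that produces \eqref{thm1stepivqsum} — gives
$$\int_0^{16\delta C_d}\lambda\,\Pr[W_N>\lambda]\,d\lambda\ \le\ \frac{256\,\delta^2 C_d^2}{(N+1)(N+2)}\Bigl(1+2\sum_{k=1}^N k\,q(k,d-1,2C_d,1)\Bigr).$$
Now comes the one genuinely new point. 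Since $b_n\varphi_n$ is uniformly distributed on $\mathbb{S}^{d-1}$, one has $q(k,d-1,2C_d,1)=p\bigl(k,d-1,\arccos(1/(8C_d))\bigr)$, and the elementary inequality $C_d\ge\sqrt d/8$ — which follows from the lower bound in \eqref{C-d-estimate} — gives $\arccos(1/\sqrt d)\le\arccos(1/(8C_d))<\pi/2$, so Lemma \ref{pbound-lem} applies and yields $q(k,d-1,2C_d,1)\le 2\sqrt d\,(13)^d(11/12)^{k/2}$ whenever $k\ge 2d/\ln(12/11)$. Crucially this bound drops below $1$ already for $k$ of size $O(d)$ with \emph{no} logarithmic factor, so splitting $\sum_{k=1}^N k\,q(k,d-1,2C_d,1)$ at $K_1:=\lceil 2\ln(2\sqrt d\,13^d)/\ln(12/11)\rceil=O(d)$ — using $q\le 1$ for $k\le K_1$ and $q\le(11/12)^{(k-K_1)/2}$ for $k>K_1$, together with the geometric sums \eqref{geosum} — bounds the whole sum by $O(d^2)$ with an explicit constant. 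Since $256\,C_d^2\le 128(d-1)/\pi$, this is exactly the first term $2e^{12}\delta^2 d^3/((N+1)(N+2))$.

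Summing the three ranges gives the displayed bound. For the two limiting statements I would set $N=\lceil d^{3/2}\rceil$, respectively $N=\lceil d^{3/2}\ln d\rceil$: in the first case $(N+1)(N+2)\sim d^3$, so the first term tends to $2e^{12}\delta^2$ and the second tends to $0$ because $(11/12)^{N/2}=e^{-\Theta(d^{3/2})}$ overwhelms $e^{d\ln(1024d)/2}=e^{\Theta(d\ln d)}$; in the second case the first term is $\sim 2e^{12}\delta^2/\ln^2 d\to 0$ and the second again tends to $0$. I expect the main obstacle to be bookkeeping rather than conceptual: the constant $2e^{12}$ is essentially tight for this argument (the honest bound for $1+2\sum_k k\,q(k,d-1,2C_d,1)$ is a few thousand times $d^2$, and after multiplication by $256\,C_d^2$ this lands just below $2e^{12}d^3$), so the splitting point $K_1$ and the geometric-series estimates have to be chosen with care; the one new inequality that must be verified is $C_d\ge\sqrt d/8$ for every $d\ge 2$, which is what licenses the application of Lemma \ref{pbound-lem}.
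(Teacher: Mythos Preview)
Your proposal is correct and follows essentially the same approach as the paper: both reduce $\Pr[W_N>\lambda]$ to a coverage problem via Lemma~\ref{cover-lem}, condition on the number of ``good'' noise samples, use the Beta-function identity to produce a sum $\sum_k(k+1)\,p(k,d-1,\theta)$, and bound that sum with Lemma~\ref{pbound-lem}; the far tail is handled by the same $\epsilon$-net argument. The only notable difference is that the paper bypasses the Bernoulli symmetrization of Theorem~\ref{main-thm1} entirely---since $\varphi_n$ is already uniform it conditions directly on $J(\lambda)=\#\{\epsilon_n+\delta\le\lambda/\sqrt{d}\}$ and uses only the positive caps ${\rm Cap}(\varphi_n,\theta_n^+)$, landing immediately on $p(k,d-1,\arccos(1/\sqrt d))$---and it splits the integral at $2\delta\sqrt d$ and $4\delta\sqrt d$ rather than at $16\delta C_d$ and $32\delta C_d$; this makes the bookkeeping a little cleaner but changes nothing conceptually.
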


\begin{proof}  The proof is divided into several steps.  We need to bound the integral
\begin{equation} \label{thm2-eq1}
\mathbb{E}|W_N|^2 = 2 \int_0^{\infty} \lambda \Pr [ W_N > \lambda ] d \lambda.
\end{equation}

\noindent {\em Step I.}   In this step we provide preliminary bounds on $\Pr [ W_N > \lambda ]$ when $0 \leq \lambda \leq 2\delta\sqrt{d}$.

Note that $\{\epsilon_n + \delta\}_{n=1}^N$ are i.i.d. uniform random variables on $[0, 2\delta]$ and define the random variable
$$J(\lambda) = \# \left( \{ \epsilon_n+\delta \}_{n=1}^N \cap \left[0, \frac{\lambda}{\sqrt{d}}\right] \right).$$
Lemma \ref{cover-lem}, \eqref{bicap-eq}, \eqref{thetaplus} imply that
\begin{align}
\Pr [ W_N>\lambda] &\leq \Pr[ \mathbb{S}^{d-1}  \not\subset \bigcup_{n=1}^N B_n(\lambda)] \notag \\
& \leq \Pr[ \mathbb{S}^{d-1}  \not\subset  \bigcup_{n=1}^N{\rm Cap}(\varphi_n, \theta^+_n(\lambda))] \notag \\
& = \sum_{k=0}^N \Pr \left[ \mathbb{S}^{d-1}  \not\subset \bigcup_{n=1}^N {\rm Cap}(\varphi_n, \theta^+_n(\lambda) ) \ \bigg\vert \ J(\lambda) = k\right] \ \Pr [ J(\lambda)=k] \notag \\
& \leq \sum_{k=0}^N \Pr \left[ \mathbb{S}^{d-1}  \not\subset \bigcup_{n=1}^k {\rm Cap}(\varphi_n, \arccos(1/\sqrt{d}) )\right] \ \Pr [ J(\lambda)=k] \notag \\
& = \sum_{k=0}^N p(k, d-1, \arccos(1/\sqrt{d})) \ \Pr[ J(\lambda) = k] \notag \\
& = \sum_{k=0}^N p(k, d-1, \arccos(1/\sqrt{d})) \binom{N}{k} \left( 1 - \frac{\lambda}{2\delta\sqrt{d}}\right)^{N-k} \left( \frac{\lambda}{2\delta\sqrt{d}} \right)^k \label{thm2-eq2}
\end{align}
The above computations made use of definition \eqref{p-def} and the independence of the $\{\epsilon_n\}_{n=1}^N$ and $\{\varphi_n\}_{n=1}^N$.\\

\noindent {\em Step II.}  In this step we bound the integral $\int_0^{2\delta \sqrt{d}} \lambda \Pr[W_N > \lambda] d\lambda$.  Equation \eqref{thm2-eq2} and properties of the beta function imply
\begin{align}
\int_0^{2\delta\sqrt{d}} & \lambda \Pr[W_N > \lambda] d\lambda \notag \\
& \leq \sum_{k=0}^N p(k, d-1, \arccos(1/\sqrt{d})) \binom{N}{k}  \int_0^{2\delta\sqrt{d}} \lambda \left( 1 - \frac{\lambda}{2\delta\sqrt{d}}\right)^{N-k} \left( \frac{\lambda}{2\delta\sqrt{d}} \right)^k d \lambda \notag\\
&= \sum_{k=0}^N p(k, d-1, \arccos(1/\sqrt{d})) \binom{N}{k} (4\delta^2 d) \int_0^1 u^{k+1} (1- u)^{N-k} d \lambda \notag \\
&=\sum_{k=0}^N p(k, d-1, \arccos(1/\sqrt{d})) \binom{N}{k}  (4\delta^2 d) \frac{(k+1)! (N-k)!}{(N+2)!} \notag \\
&=\frac{4\delta^2 d}{(N+1)(N+2)} \sum_{k=0}^N p(k, d-1, \arccos(1/\sqrt{d})) \thinspace (k+1) \notag \\
&\leq \frac{4\delta^2 d}{(N+1)(N+2)} \left( 1 + 2 \sum_{k=1}^N k \thinspace p(k, d-1, \arccos(1/\sqrt{d})) \right). \label{intsumsqrtd}
\end{align}

\noindent {\em Step III.}  In this step we bound the sum $\sum_{k=1}^N k \thinspace p(k, d-1, \arccos(1/\sqrt{d}))$ appearing in \eqref{intsumsqrtd}.

First note that
\begin{equation}
\sum_{k=1}^{60d} k \thinspace p(k, d-1, \arccos(1/\sqrt{d})) \leq \sum_{k=1}^{60d} k = \frac{60d(60d+1)}{2} \leq 3600d^2. \label{thm2-eq3}
\end{equation}
Next by Lemma \ref{pbound-lem} and \eqref{geosum}
\begin{align}
\sum_{k=60d+1}^N k \thinspace p(k, d-1, \arccos(1/\sqrt{d})) & \leq \sum_{k=60d+1}^N k \left( 2 \sqrt{d} (13)^d \left( \frac{11}{12} \right)^{k/2} \right)\notag \\
& = 2 \sqrt{d} (13)^d \sum_{k=60d+1}^N k\left( \frac{11}{12} \right)^{k/2}\notag \\
& \leq 2 \sqrt{d} (13)^d \sum_{k=1}^{\infty}  (k+60d) \left( \sqrt{\frac{11}{12}} \right)^{k+60d}\notag \\
& \leq 2 \sqrt{d} (120d) \sum_{k=1}^{\infty} k \left( \sqrt{\frac{11}{12}} \right)^{k} \notag \\
& = 2 \sqrt{d} (120d) \left( \frac{\sqrt{132}}{(\sqrt{12} - \sqrt{11})^2} \right) \notag \\
& \leq  12700d^{3/2}.
\label{thm2-eq4}
\end{align}
Combining \eqref{thm2-eq3} and \eqref{thm2-eq4} shows that 
\begin{equation} \label{thm2-eq5}
\sum_{k=1}^N  k \thinspace p(k, d-1, 1/\sqrt{d}) \leq 3600d^2 + 12700d^{3/2}.
\end{equation}
Now \eqref{intsumsqrtd} and \eqref{thm2-eq5} show that
\begin{equation}
\int_0^{2\delta \sqrt{d}} \lambda  \Pr[W_N > \lambda] d\lambda \leq \frac{e^{12}\delta^2d^3}{(N+1)(N+2)}. \label{thm2-eq5b}\\
\end{equation}

\noindent {\em Step IV.}  In this step we bound the integral $\int_{2\delta\sqrt{d}}^{4\delta\sqrt{d}} \lambda \Pr[W_N>\lambda] d \lambda$.
Note that if $\lambda \geq 2\delta \sqrt{d}$ then \eqref{thetaplus} implies that
$\theta^+_n(\lambda) \geq \arccos(1/\sqrt{d})$ which, in turn, implies that
\begin{equation} \label{thm2-eq6}
{\rm Cap}(\varphi_n, \arccos(1/\sqrt{d})) \subset  {\rm Cap}(\varphi_n, \theta^+_n(\lambda))
\end{equation}
Lemma \ref{cover-lem} and equations \eqref{bicap-eq} and \eqref{thm2-eq6} imply that
\begin{align*}
\Pr [W_N>\lambda] &= \Pr [ \mathbb{S}^{d-1} \not\subset \bigcup_{n=1}^N B_n(\lambda)] \\
&\leq \Pr [ \mathbb{S}^{d-1} \not\subset \bigcup_{n=1}^N {\rm Cap}(\varphi_n, \theta^+_n(\lambda))]\\
&\leq \Pr [ \mathbb{S}^{d-1} \not\subset \bigcup_{n=1}^N {\rm Cap}(\varphi_n, \arccos(1/\sqrt{d}))]\\
&= p(N, d-1, \arccos(1/\sqrt{d})).
\end{align*}
Thus by Lemma \ref{pbound-lem}
\begin{align}
\int_{2\delta\sqrt{d}}^{4\delta\sqrt{d}}  \lambda \Pr [ W_N > \lambda] d\lambda
&\leq p(N, d-1, \arccos(1/\sqrt{d})) \int_{2\delta\sqrt{d}}^{4\delta\sqrt{d}}  \lambda d \lambda\notag\\
&\leq 6 \delta^2d \left( 2 \sqrt{d} (13)^d \left(\frac{11}{12} \right)^{N/2} \right)\notag\\
&= 12 \delta^2 d^{3/2} (13)^d  \left(\frac{11}{12} \right)^{N/2}.\label{thm2-eq7}
\end{align}

\noindent {\em Step V.}  In this step we bound the integral $\int_{4\delta\sqrt{d}}^{\infty} \lambda \Pr[W_N>\lambda] d\lambda$.
For this we specialize the $\epsilon$-net argument used in the proof of Theorem \ref{main-thm1} to the case when $\{\varphi_n\}_{n=1}^N$ are
i.i.d. uniformly distributed on $\mathbb{S}^{d-1}$.

Using \eqref{cover-discrete-bnd} and \eqref{eq-useinthm2} gives
\begin{align}
\Pr [W_N>\lambda] &\leq \left( \frac{8}{\epsilon} \right)^{d-1}  \sup_{z \in \mathbb{S}^{d-1}} \left( \Pr [ z \notin T_{\epsilon}(B_1(\lambda))] \right)^N\notag\\
& \leq  \left( \frac{8}{\epsilon} \right)^{d-1} \sup_{z \in \mathbb{S}^{d-1}} \left( \Pr \left[ |\langle z, \varphi\rangle| \leq \frac{2\delta}{\lambda} + \epsilon  \right] \right)^N.
\label{thm2-eq8}
\end{align}
Since $\varphi$ is uniformly distributed on $\mathbb{S}^{d-1}$ it follows from Example \ref{unif-gencond-ex} and \eqref{C-d-estimate} that
\begin{equation} \label{thm2-eq9}
\forall z \in \mathbb{S}^{d-1}, \ \ \ \Pr \left[ | \langle z, \varphi \rangle | \leq \frac{3\delta}{\lambda} \right]   \leq \frac{3 \delta \sqrt{d}}{\lambda}.
\end{equation}
Taking $\epsilon = {\delta}/{\lambda}$ in \eqref{thm2-eq8} and using \eqref{thm2-eq9} shows that 
\begin{align}
\Pr [W_N>\lambda] &\leq  \left( \frac{8\lambda}{\delta} \right)^{d-1} \sup_{z\in\mathbb{S}^{d-1}} 
\left( \Pr \left[ | \langle z, \varphi \rangle | \leq \frac{3\delta}{\lambda}\right] \right)^N\notag\\
&\leq  \left( \frac{8\lambda}{\delta} \right)^{d-1} \left( \frac{3\delta \sqrt{d}}{\lambda} \right)^N. \label{thm2-eq10}
\end{align}
It now follows from \eqref{thm2-eq10} that if $N \geq d+2$ then
\begin{align}
\int_{4\delta\sqrt{d}}^{\infty} \lambda \Pr [ W_N > \lambda] d\lambda 
&\leq \int_{4\delta\sqrt{d}}^{\infty} \lambda  \left( \frac{8\lambda}{\delta} \right)^{d-1} \left( \frac{3\delta \sqrt{d}}{\lambda} \right)^N d\lambda \notag\\
&= \delta^2 (3\sqrt{d})^N 8^{d-1} \frac{(4\sqrt{d})^{d-N+1}}{N-d-1}\notag\\
&\leq \delta^2\sqrt{d} \left( \frac{3}{4} \right)^N \left( 32\sqrt{d} \right)^d. \label{thm2-eq11}
\end{align}

\noindent {\em Step VI.}  Combining \eqref{thm2-eq1}, \eqref{thm2-eq5b}, \eqref{thm2-eq7} and \eqref{thm2-eq11} yields

\begin{align*}
\frac{1}{2}\mathbb{E}|W_N|^2 &\leq \frac{e^{12} \delta^2 d^3}{(N+1)(N+2)}
+12 \delta^2 d^{3/2} (13)^d  \left(\frac{11}{12} \right)^{N/2}
+\delta^2\sqrt{d} \left( \frac{3}{4} \right)^N \left( 32\sqrt{d} \right)^d\\
& \leq \frac{e^{12} \delta^2 d^3}{(N+1)(N+2)} + 13\delta^2 d^{3/2}\left( \frac{11}{12} \right)^{N/2} \left( 32\sqrt{d} \right)^d\\
&=\frac{e^{12} \delta^2 d^3}{(N+1)(N+2)} + 13\delta^2 d^{3/2}\left( \frac{11}{12} \right)^{N/2} e^{\frac{d \ln(1024d)}{2}}.
\end{align*}
This completes the proof.\\
\end{proof}

\begin{corollary}
There exist absolute constants $a,c>0$ such that if $N \geq a d \ln d$
and $d \geq 2$ and
if the random vectors $\{\varphi_n\}_{n=1}^N \subset \R^d$ are i.i.d. uniformly distributed on $\mathbb{S}^{d-1}$ then 
\begin{equation}\label{cor-bnd-unif}
\mathbb{E}|W_N|^2 \leq \frac{c \thinspace d^3 \delta^2}{N^2}.
\end{equation}
\end{corollary}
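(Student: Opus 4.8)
The statement is a routine consequence of Theorem~\ref{mainthm2}: the plan is to show that once $N\ge a\,d\ln d$ for a sufficiently large absolute constant $a$, the exponentially small second term in the bound of Theorem~\ref{mainthm2} is dominated by the first term, which is already of the form $\delta^2 d^3/N^2$.

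First I would check that Theorem~\ref{mainthm2} applies. For $d\ge 2$ one has $d\ln d\ge 2\ln 2$, so choosing $a\ge 3$ (say) already guarantees $N\ge a\,d\ln d\ge d+2$, and Theorem~\ref{mainthm2} gives
\[
\mathbb{E}|W_N|^2 \;\le\; \frac{2e^{12}\delta^2 d^3}{(N+1)(N+2)} \;+\; 26\,\delta^2 d^{3/2}\Bigl(\frac{11}{12}\Bigr)^{N/2} e^{\frac{d\ln(1024d)}{2}}.
\]
Since $(N+1)(N+2)\ge N^2$, the first term is at most $2e^{12}\delta^2 d^3/N^2$, so it suffices to prove that the second term is at most $\delta^2 d^3/N^2$ whenever $N\ge a\,d\ln d$; the Corollary then holds with $c=2e^{12}+1$.

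For the second term the key elementary fact is that $\ln(1024d)\le 11\ln d$ for all $d\ge 2$ (because $\ln 1024 = 10\ln 2\le 10\ln d$), hence $e^{\frac{d}{2}\ln(1024d)}\le e^{\frac{11}{2}d\ln d}$. Writing $(11/12)^{N/2}=e^{-\frac{N}{2}\ln(12/11)}$ and multiplying the target bound by $N^2$, it reduces, after taking logarithms, to an inequality of the form
\[
\frac{N}{2}\ln\frac{12}{11} \;\ge\; \frac{11}{2}\,d\ln d + 2\ln N + \ln 26 .
\]
I would verify this by splitting $N$ into two halves. One half handles the $d$-dependent term: for $N\ge \dfrac{22}{\ln(12/11)}\,d\ln d$ one has $\dfrac{N}{4}\ln\dfrac{12}{11}\ge \dfrac{11}{2}d\ln d$. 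The other half handles the remaining terms: since $d\ln d\ge 2\ln 2$, the quantity $2\ln N+\ln 26$ grows only logarithmically in $N$, so $\dfrac{N}{4}\ln\dfrac{12}{11}\ge 2\ln N+\ln 26$ holds once $N$ exceeds a fixed absolute constant $N_0$. Taking $a$ larger than both $\dfrac{22}{\ln(12/11)}$ and $N_0/(2\ln 2)$ --- all absolute constants --- makes the displayed inequality valid for every $N\ge a\,d\ln d$, which completes the argument.

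The argument has no genuine obstacle; it is bookkeeping. The one point requiring a little care is the choice of $a$, which must simultaneously ensure that Theorem~\ref{mainthm2} applies ($N\ge d+2$), that the exponential decay $(11/12)^{N/2}$ absorbs the factor $e^{\frac{11}{2}d\ln d}$, and that it also dominates the remaining lower-order prefactors in $N$ and $d$ --- but each of these is an elementary comparison of linear versus logarithmic growth.
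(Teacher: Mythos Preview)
Your proposal is correct and is precisely the intended derivation: the paper states this corollary without proof immediately after Theorem~\ref{mainthm2}, so the only approach is to absorb the exponentially decaying second term into the $d^3/N^2$ term once $N\ge a\,d\ln d$, which is exactly what you do. The bookkeeping (in particular $\ln(1024d)\le 11\ln d$ for $d\ge 2$ and the splitting of $N$ to handle the $d\ln d$ and $\ln N$ contributions separately) is clean and correct.
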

\vspace{.1in}

It is instructive to compare the dimension dependence of consistent reconstruction with linear reconstruction.
\begin{example}
Fix an error tolerance $\eta>0$. Suppose that the random vectors $\{\varphi_n\}_{n=1}^N \subset \mathbb{S}^{d-1}$ are uniformly distributed on $\mathbb{S}^{d-1}$,
that $\{\epsilon_n\}_{n=1}^N$ are uniformly distributed on $[-\delta, \delta]$, and that the $\{\varphi_n\}_{n=1}^N$ and $\{\epsilon_n\}_{n=1}^N$ are mutually independent.

The bound \eqref{cor-bnd-unif} shows that if $N \geq \frac{d^{3/2}\delta \sqrt{c}}{\eta}$ then the mean squared error for a consistent estimate $\widetilde{x}_{\rm cr}$ satisfies
$$\mathbb{E}\|x - \widetilde{x}_{\rm cr} \|^2 \leq \mathbb{E}|W_N|^2 \leq \eta^2.$$  Thus
$N=\mathcal{O}(d^{3/2})$ measurements are sufficient for consistent reconstruction to achieve $\eta^2$-precise
mean squared error.

For linear reconstruction with an arbitrary dual frame, applying 
\eqref{utf-mse} with $\sigma = \delta^2/3$ shows that linearly reconstructed $\widetilde{x}_{\rm lin}$ satisfies
$$\mathbb{E}\| x - \widetilde{x}_{\rm lin}\|^2 \geq \frac{d^2\delta^2}{3N}.$$
Thus at least $N \geq \frac{d^2\delta^2}{3\eta^2}$ measurements are necessary for linear reconstruction to achieve $\eta^2$-precise mean squared error.
\end{example}

\section{Acknowledgments}

A.M. Powell was supported in part by NSF DMS 1211687 and NSF DMS 0811086, and also gratefully acknowledges the Academia Sinica Institute of Mathematics (Taipei, Taiwan) for its hospitality and support.

The authors thank Yaniv Plan, Mark Rudelson, Roman Vershynin, and Elena Yudovina for helpful comments.  The authors especially thank Elena Yudovina
for a suggestion which led to an improved proof of Theorem \ref{main-thm1} that is more precise than an earlier version and that is also 
more consistent with the proof of Theorem \ref{mainthm2}.

\end{document}